\newtheorem{theorem}{Theorem}
\DeclareMathOperator*{\argmin}{argmin}
\newcommand{\x}{{\mathbf{x}}}
\newcommand{\y}{{\mathbf{y}}}
\newcommand{\z}{{\mathbf{z}}}
\newcommand{\A}{{\mathbf{A}}}
\newcommand{\btheta}{{\boldsymbol{\theta}}}
\begin{document}
\doublespacing
%
\title{Understanding Untrained Deep Models for Inverse Problems: Algorithms and Theory}
\author{Ismail~Alkhouri\footnote{The first three authors contributed equally. © 2025 IEEE. Personal use of this material is permitted. Permission from IEEE must be obtained for all other uses, in any current or future media, including reprinting/republishing this material for advertising or promotional purposes, creating new collective works, for resale or redistribution to servers or lists, or reuse of any copyrighted component of this work in other works.},~\IEEEmembership{Member,~IEEE},  Evan~Bell, 
Avrajit~Ghosh,
Shijun~Liang,~\IEEEmembership{Student Member,~IEEE}, 
Rongrong~Wang,~\IEEEmembership{Senior Member,~IEEE}, Saiprasad~Ravishankar,~\IEEEmembership{Senior Member,~IEEE} 
}

\maketitle
\vspace{-2cm}

\begin{abstract}

In recent years, deep learning methods have been extensively developed for inverse imaging problems (IIPs), encompassing supervised, self-supervised, and generative approaches. Most of these methods require large amounts of labeled or unlabeled training data to learn effective models. However, in many practical applications, such as medical image reconstruction, extensive training datasets are often unavailable or limited. A significant milestone in addressing this challenge came in 2018 with the work of Ulyanov et al., which introduced the Deep Image Prior (DIP)—the first training-data-free \textcolor{black}{convolutional} neural network method for IIPs. Unlike conventional deep learning approaches, DIP requires only a convolutional neural network, the noisy measurements, and a forward operator. By leveraging the implicit regularization of deep networks initialized with random noise, DIP can learn and restore image structures without relying on external datasets. However, a well-known limitation of DIP is its susceptibility to \textcolor{black}{over-fitting}, primarily due to the over-parameterization of the network. In this tutorial paper, we provide a comprehensive review of DIP, including a theoretical analysis of its training dynamics.
We also categorize and discuss recent advancements in DIP-based methods aimed at mitigating \textcolor{black}{over-fitting}, including techniques such as regularization, network re-parameterization, and early stopping. Furthermore, we discuss approaches that combine DIP with pre-trained neural networks, present empirical comparison results against data-centric methods, and highlight open research questions and future directions.
\end{abstract}

\vspace{-0.2in}
\begin{IEEEkeywords}
Deep image prior, convolutional neural networks, deep implicit bias, dataless neural networks, unrolling models, neural tangent kernel, network pruning, optimization, inverse problems, medical imaging.
\end{IEEEkeywords}

\vspace{-0.2in}
\section{Introduction}\label{sec: Intro}









Inverse imaging problems (IIPs) arise across a variety of real-world applications~\cite{8434321,8103129}. In these applications, the goal is to estimate an unknown signal $\mathbf{x}\in \mathbb{R}^n$ from its measurements or a degraded signal $\mathbf{y}\in \mathbb{R}^m$, which are often corrupted by noise. Mathematically, they are typically \textcolor{black}{related as}
$\mathbf{y} = \mathcal{A}(\mathbf{x}) + \mathbf{n}\:,$ 
where $\mathcal{A}(\cdot) : \mathbb{R}^n\rightarrow \mathbb{R}^m$ (with $m\leq n$) represents a linear or non-linear forward model capturing the measurement process, and $\mathbf{n}\in \mathbb{R}^m$ denotes the general noise present in the measurements. Exactly solving IIPs is often challenging due to their ill-posedness and is commonly formulated as the optimization problem 
\begin{equation}\label{eqn: main}
    \min_{\mathbf{x}} \ell(\mathcal{A}(\mathbf{x}),\mathbf{y}) + \lambda R(\mathbf{x})\:,
\end{equation}
where $\ell(\cdot\:,\cdot)$ is a data-fitting loss capturing the fidelity between the estimated signal and observed measurements, alongside regularizer $R(\cdot)$ with a non-negative weighting $\lambda$, representing the signal prior. \textcolor{black}{Classical methods for solving~\eqref{eqn: main} include frameworks such as Compressed Sensing (CS)~\cite{lustig2007sparse} where the key insight is although the estimated image may not be low-dimensional, it often has a sparse representation in some known basis. Therefore, }various regularizers have been explored for promoting sparsity as well as related low-rank promoting regularizers \cite{ravishankar2019image}. \textcolor{black}{Other methods depend on traditional model-based reconstruction such as combining Plug-and-Play (PnP) \cite{venkatakrishnan2013plug} and Block-Matching and 3D Filtering (BM3D) \cite{4271520}.} \textcolor{black}{These methods typically require task-specific designs and handcrafted priors that may be unsuitable for certain settings. However, handcrafted priors are often overly simplistic and may not be able to capture the rich structures of natural images \cite{ravishankar2019image}. Therefore, with the advancement of machine learning tools, different approaches have been explored as we describe next.}



In recent years, \textcolor{black}{the learning of sparsity-promoting models has shown promise for IIPs~\cite{ravishankar2019image}. Moreover,} numerous Deep Neural Network (DNN) techniques have been developed to address IIPs as illustrated in \textcolor{black}{most of} Fig.~\ref{fig: timeline}. These techniques include supervised models \textcolor{black}{(such as end-to-end CNNs \cite{8103129} and deep unrolling \cite{8434321})}, generative models \textcolor{black}{(such as Diffusion posterior sampling, DPS~\cite{chung2022diffusion} and Decomposed Diffusion Sampling, DDS \cite{chung2024decomposed})}, and \textcolor{black}{self-supervised methods (such as Noise2Noise \cite{lehtinen2018noise2noise})}. Despite their effectiveness, such models generally require extensive amounts of data for training, which limits their applicability in training-data-limited tasks, including but not limited to medical applications (e.g., magnetic resonance imaging (MRI) and computed tomography (CT)). \textit{This challenge highlights the need for methods that can reduce the reliance on large, fully-sampled (or labeled) datasets and/or pre-trained models.}

Under the limited training data setting, 
early popular approaches included those 
relying on adapted models such as patch-based dictionaries and sparsifying transforms, often estimated from only measurements, and using them to reconstruct underlying images~\cite{ravishankar2015efficientTransform,Sai_Dict}. More recently, these methods have been extended to estimate neural networks without training data. One notable approach is deep image prior (DIP)~\cite{ulyanov2018deep}, a method that operates without pre-trained models, instead leveraging the parameters of a deep convolutional neural network architecture (e.g., U-Net \cite{ronneberger2015u}) within a training-data-less setting \cite{alkhouri2022differentiable}, optimizing instead costs like~\eqref{eqn: main}. \textcolor{black}{More specifically, DIP re-parameterizes the optimization problem in \eqref{eqn: main} using a deep untrained network. 
In DIP~\cite{ulyanov2018deep}, it was \textit{empirically} demonstrated} that the architecture of a generator network alone is capable of capturing a significant amount of low-level image statistics even before any learning takes place. 

Although DIP has shown significant potential for solving various inverse imaging problems, it (and most of its variants) faces challenges with noise \textcolor{black}{over-fitting}~\cite{alkhouriNeuIPS24} as DIP first learns the natural image component of the corrupted image but gradually overfits the noise due to its highly over-parameterized nature -- a phenomenon known as spectral bias~\cite{chakrabarty2019spectral}. Consequently, the optimal number of optimization steps (before \textcolor{black}{over-fitting} worsens) varies not only by task but also by image for the same task and distribution. As optimization progresses, the network's output tends to fit the noise present in the measurements and may also fit to undesired images within the null space of the imaging measurement operator. An example of these phenomena for the tasks of natural image denoising and undersampled MR image reconstruction are shown in Fig. \ref{fig:overfitting}.
\begin{figure}[t]
\centering
\includegraphics[width=1\linewidth]{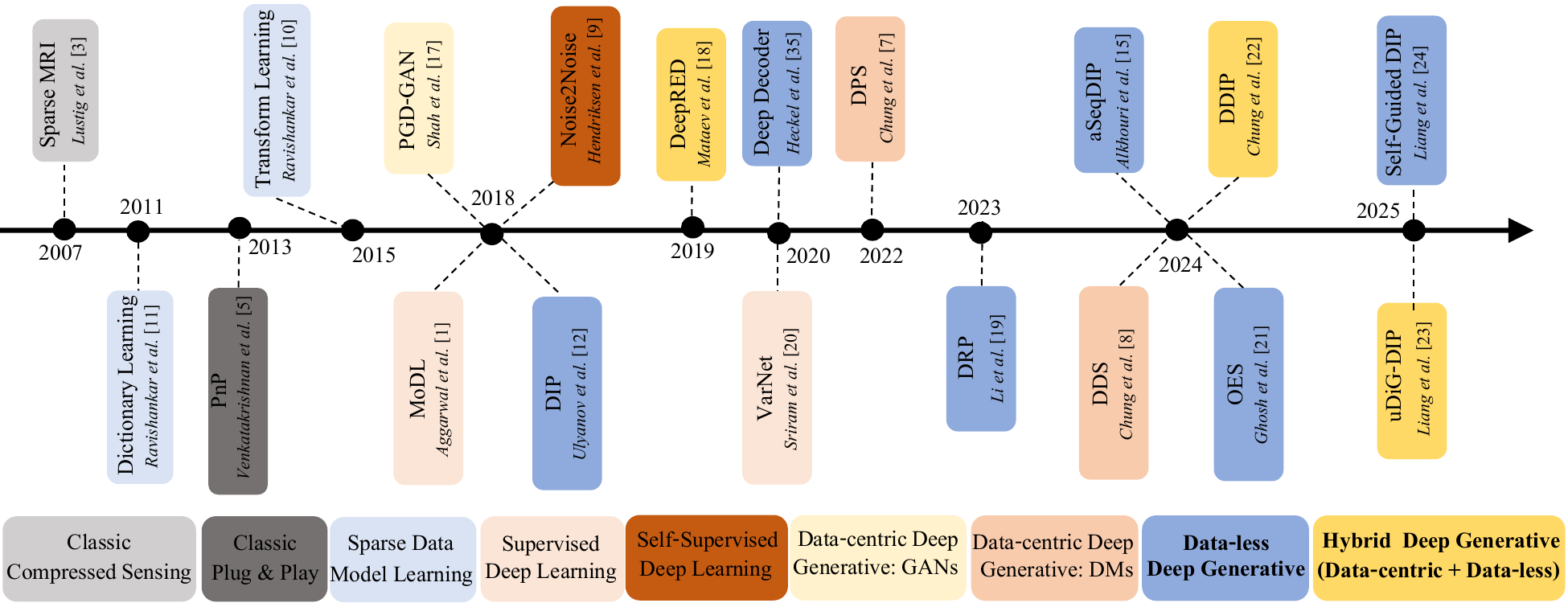}
\vspace{-0.8cm}
\caption{\textbf{Timeline of the development of approaches for inverse imaging problems \textcolor{black}{starting from classical approaches to learning and neural networks-based methods}}. The bottom row categorizes each method in the top two rows under a broader approach. The bolded approaches in the bottom right represent the categories covered in this tutorial. This figure is best viewed in color. From left to right, the acronyms are: Sparse MRI \cite{lustig2007sparse}, \textcolor{black}{Plug and Play (PnP) \cite{venkatakrishnan2013plug}}, Projected Gradient Descent Generative Adversarial Networks (PGD-GAN) \cite{PGD-GAN}, Model-based Deep Learning (MoDL) \cite{8434321}, Deep Image Prior (DIP) \cite{ulyanov2018deep}, \textcolor{black}{Noise to Noise (Noise2Noise) \cite{lehtinen2018noise2noise}}, Deep Regularization by Denoising (DeepRED) \cite{Mataev_2019_ICCV}, \textcolor{black}{Deep Random Projector (DRP) \cite{Li_2023_CVPR},} Variational Network (VarNet) \cite{sriram2020end}, Diffusion Posterior Sampling (DPS) \cite{chung2022diffusion}, \textcolor{black}{\textcolor{black}{Decomposed Diffusion Sampling (DDS) \cite{chung2024decomposed}}, Optimal Eye Surgeon (OES) \cite{ghosh2024optimal}, Autoencoding Sequential DIP (aSeqDIP) \cite{alkhouriNeuIPS24}, \textcolor{black}{Diffusion Deep Image Prior (DDIP) \cite{ddip}}, and Sequential Diffusion Guided DIP (uDiG-DIP) \cite{LianguDigDIP}. Hybrid deep generative methods, which integrate DIP with other pre-trained models, will be further discussed in Section~\ref{sec: combining DIP with other models}.}}
\label{fig: timeline}
\vspace{-0.4cm}
\end{figure}
\begin{figure}[t]
\centering
\includegraphics[width=1\linewidth]{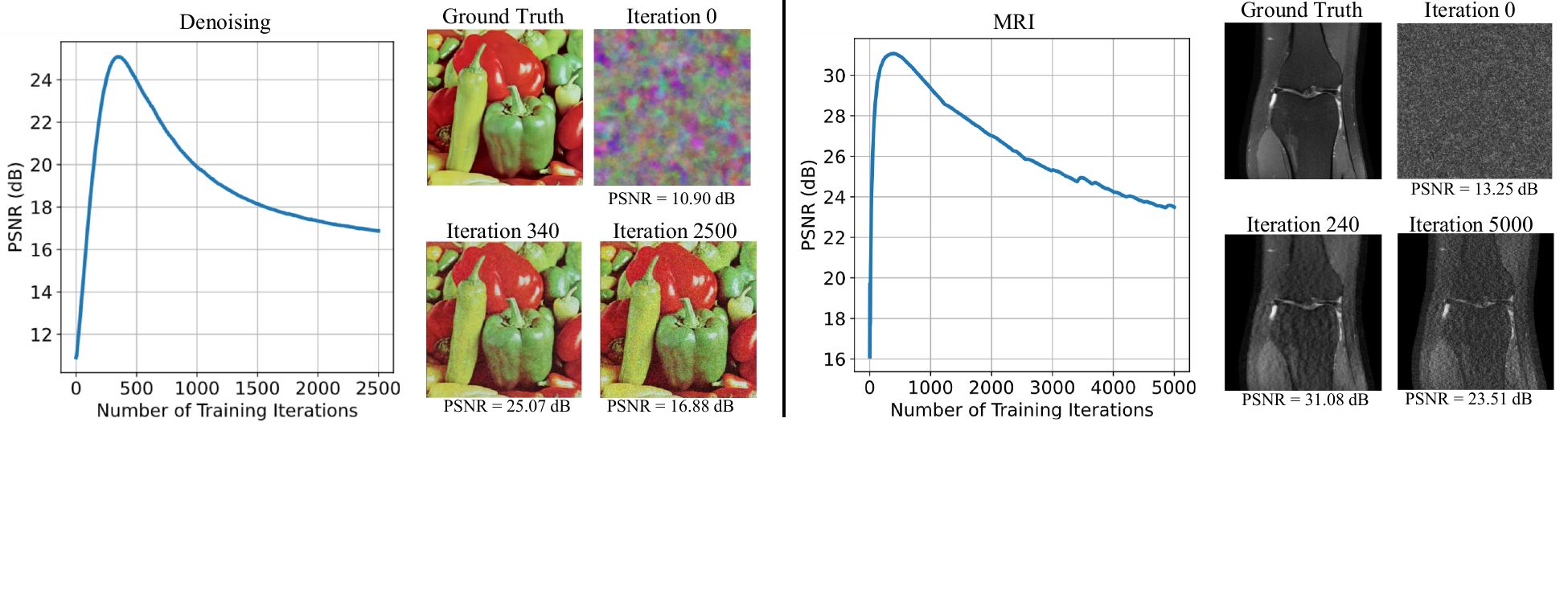}
\vspace{-2.9cm}
\caption{\textbf{Demonstration of the \textcolor{black}{over-fitting} phenomenon in DIP}. Left: Gaussian denoising with $\sigma=50$ using the ``Peppers" test image. Right: $4\times$ accelerated MRI reconstruction using an image from the \texttt{fastMRI} knee dataset. In each task, the three images show the initial network output (iteration 0), the output at the peak of the PSNR curve (iteration 340 for denoising and iteration 240 for MRI), and the output at the end of the optimization (iteration 2500 for denoising and iteration 5000 for MRI). For denoising, the network overfits to the noise in the target image; in MRI, the network eventually outputs spurious frequency content in the null space of the forward operator.}
\label{fig:overfitting}
\vspace{-0.4cm}
\end{figure}
Towards addressing the noise \textcolor{black}{over-fitting} issue, several approaches have been proposed which conceptually can be categorized into three categories: regularization (such as \cite{liang2024analysis}), network re-parameterization (such as \cite{ghosh2024optimal,you2020robust}), and early stopping (such as \cite{wang2021early}). DIP has been applied to various IIPs, including MRI \cite{alkhouriNeuIPS24}, CT \cite{liang2024analysis}, and several image restoration tasks \cite{wang2021early}, achieving highly competitive (and sometimes leading) qualitative results. For example, the work in \cite{alkhouriNeuIPS24} (resp. \cite{wang2021early}) demonstrated that DIP variants on MRI can outperform data-centric generative (resp. supervised) methods in terms of the reconstruction quality --\textit{all without requiring any training data}. The concept of solving inverse problems in a training-data-free regime using network-based priors has been extended to solving dynamic (e.g., video) inverse problems \cite{NEURIPS2020_0c0a7566}, where the authors showed that the architecture of the network can achieve an improved blind temporal consistency. Towards understanding the optimization dynamics in DIP, multiple studies have considered the Neural Tangent Kernel (NTK) \cite{liang2024analysis,tachella2021neural}, which is a tool used to analyze the training dynamics of neural networks in the infinite width limit. In the NTK regime, updates take place mostly in the top eigenspaces of the kernel leading to smooth approximations of the image. Deep networks have an ``implicit bias'' for reconstructing low frequency parts of the image before they overfit to the noise (spectral bias~\cite{chakrabarty2019spectral}).

\noindent\textbf{Contributions}: In this \textcolor{black}{tutorial article}, we first present the DIP framework, and \textcolor{black}{discuss} its fundamental issue with respect to noise \textcolor{black}{over-fitting}. Subsequently, we describe the theoretical results/tools used to explain/justify deep image prior. Second, we review key recent works that address the noise \textcolor{black}{over-fitting} issues and describe additional method-specific theoretical results for natural and medical imaging problems (e.g., fitting the null space in the forward operator in MRI \cite{liang2024analysis} and the need for double priors for phase retrieval \cite{zhuang2023practical}). Third, we review recent methods that combine DIP with other pre-trained models. We also present empirical comparisons and insights. Finally, we describe gaps in theory for 
recent directions (open questions and future directions). Our goal is to create awareness and accelerate research in the topic of DIP among researchers in computational imaging applications, signal processing, optimization, and machine learning. The tutorial is also used to educate and encourage more scholars from multi-disciplinary fields to exploit the Deep Image Prior and related paradigms from the image processing community. 



\subsection{\textcolor{black}{Related Works}}

Here, we first discuss two recent papers that focused on topics close to this paper. \textcolor{black}{Then, we describe the similarities and differences between DIP and Implicit Neural Representation (INR).}

In \cite{lu2022priors}, the authors surveyed previous studies that use DL-based priors for image restoration and enhancement, including DIP. In addition to DIP, the authors covered data-centric DNN methods where the prior is either a pre-trained GAN or a pre-trained supervised model. Our paper differs in that it focuses more on existing theoretical results and open questions, and second, in the fact that we cover more categories (for noise \textcolor{black}{over-fitting} prevention) than \cite{lu2022priors} (where the authors only covered two types of methods). The closest work to our paper is \cite{9878048}, where the authors presented the first paper that comprehensively covers applications of DIP (or Untrained Neural Network Priors) for inverse imaging problems (medical and natural images) up to late 2022. There are two main differences between our work and \cite{9878048}. First and perhaps more importantly, in our paper, we focus on key theoretical results and analysis of DIP and categorize studies based on addressing the noise over-fitting issue (instead of applications) along with their method-specific theoretical results. Additionally, the open questions and future directions focus on identifying theoretical questions and gaps. Second, we focus on 
several more recent works compared to~\cite{9878048}. 

\noindent\textbf{\textcolor{black}{Relation to Implicit Neural Representations Networks}}: The work in \cite{INR_Liyie} introduced Implicit Neural Representation (INR) with Prior embedding (NeRP) for IIPs. While both NeRP and DIP optimize the parameters of a NN for image reconstruction, they differ in three key aspects. First, NeRP employs a neural network without convolutions (i.e., multilayer perceptron (MLP)), whereas DIP typically relies on CNNs. Second, the mapping functions are very different as NeRP maps from spatial coordinates to function/image values. Third, \textcolor{black}{many NeRP-based methods} require a prior image, as training takes place in two stages: (\textit{i}) the prior image is used to train an MLP; and (\textit{ii}) the typical data-fitting loss is used on another MLP to find the reconstructed image for which the initialization of the weights of the second MLP is based on the pre-trained MLP on the prior image (from the first stage). \textcolor{black}{We note that there have been studies that are not prior-informed INR such as the method in \cite{9606601}. However, these methods still require the specific spatial input embeddings and the use of non-convolutional architectures. } 

\section{Deep Image Prior: Framework, Challenges, \& Theory}

\subsection{\textcolor{black}{Basics of} Deep Image Prior}


Let $f : \mathbb{R}^l \rightarrow \mathbb{R}^n$ be a convolutional NN with parameters $\boldsymbol{\theta}$ (typically selected as a U-Net with residual connections \cite{ronneberger2015u}). The DIP image reconstruction framework re-parameterizes the optimization problem in \eqref{eqn: main} with $\mathbf{x} = f_\theta(\mathbf{z})$, where $\mathbf{z} \in \mathbb{R}^n$ is typically randomly chosen. Due to this over-parameterization, the structure of the CNN provides an implicit prior \cite{ulyanov2018deep} (i.e., the second term in \eqref{eqn: main} is neglected). Therefore, the \textit{training-data-free} DIP image reconstruction is obtained through the minimization of the following objective:
\begin{equation}
\label{eqn: standard DIP}
    \hat{\boldsymbol{\theta}} = \argmin_{\boldsymbol{\theta}}  ~ \frac{1}{2} \| \mathbf{A} f_{\boldsymbol{\theta}}(\mathbf{z}) - \mathbf{y} \|_2^2 \:,\;\;\;\; \hat{\mathbf{x}}=  f_{\hat{\boldsymbol{\theta}}}(\mathbf{z})\:,
\end{equation}
where $\hat{\mathbf{x}}$ is the reconstructed image. We note that we use a linear forward operator for ease of notation. We refer to the optimization problem \eqref{eqn: standard DIP} as ``vanilla DIP" throughout the remainder of the paper.


\subsection{\textcolor{black}{DIP Challenges}}

In DIP, selecting the number of iterations of an algorithm to optimize the objective in~\eqref{eqn: standard DIP} poses a challenge as the network would eventually fit the noise present in $\mathbf{y}$ or could fit to undesired images based on the null space of $\mathbf{A}$. An empirical demonstration of this phenomenon for image denoising and MRI reconstruction is shown in Fig.~\ref{fig:overfitting}. The optimal stopping iteration can vary not only from task to task, but also between images within the same task, dataset, and semantics. To further demonstrate this, for the denoising task, we pick 4 ``cat'' images from the ImageNet dataset, and run the optimization in \eqref{eqn: standard DIP} with the exact same initialization (i.e., for $\mathbf{z}$ and $\boldsymbol{\theta}$) and optimizer (where we use Adaptive Moment Estimation (Adam)) with learning rate $10^{-4}$. Fig.~\ref{fig: psnr_curve_imagenet} presents the Peak Signal to Noise Ratio (PSNR) curves over optimization iterations and highlights the maximum PSNR for each image (represented by the dotted vertical lines). As observed, even with the same architecture, initialization  for theta, and same image semantics, the optimal number of steps to optimize is not the same.


Another challenge of DIP is its computational cost at inference, as a separate optimization problem must be solved for each measurement vector $\mathbf{y}$. However, this issue has received less attention, as, to our knowledge, only one method—Deep Random Projector \cite{Li_2023_CVPR}—has explicitly addressed the slow inference problem, which we discuss in the next section.

\textcolor{black}{We note that DIP is not the only DL-based method that is slow at inference as several data-centric diffusion models (DMs) based IIP solvers (such as DPS \cite{chung2022diffusion} and DDS \cite{chung2024decomposed}) also require long inference times due to their iterative sampling procedure. An example of how a variant of DIP can be faster than some DM-based methods is given in Table~2 in \cite{alkhouriNeuIPS24}.} \textcolor{black}{When compared to non-diffusion data-centric methods, such as the ones based on supervised learning \cite{8434321,8103129,sriram2020end}, the run-time of DIP is expected to be much higher than the inference run-time of supervised deep network methods. The reason is that the supervised methods require one or very few forward passed through the trained network. However, DIP does not require any training data or pre-trained models and optimizes at inference time.}


%
\begin{figure}[t]
\centering
\includegraphics[width=1\linewidth]{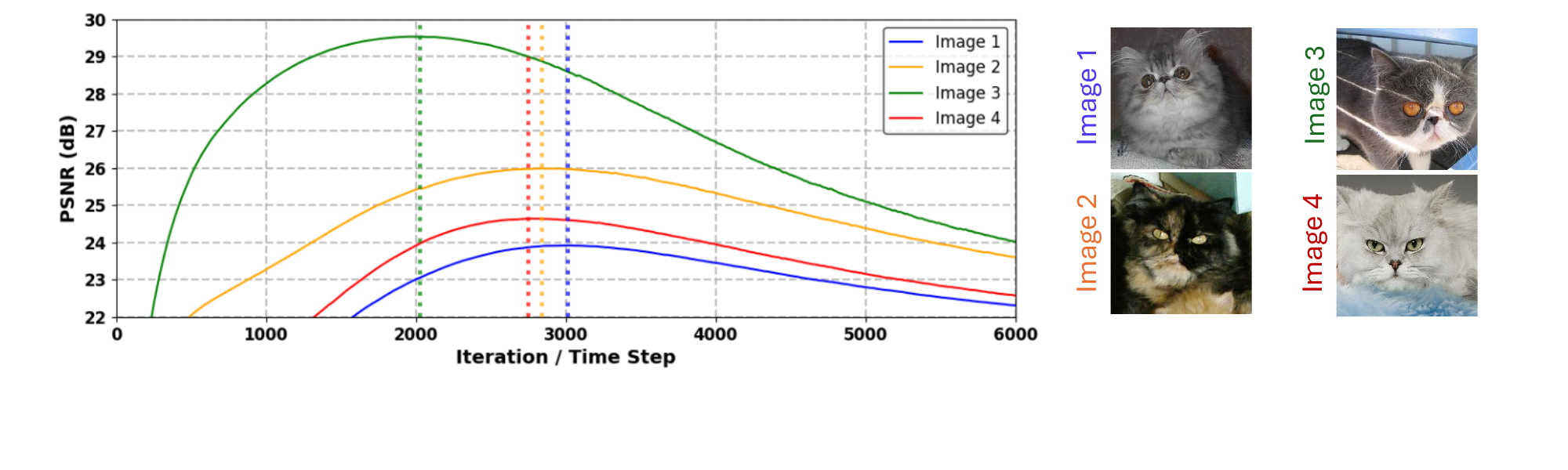}
\vspace{-2.0 cm}
\caption{PSNR curves for four ``cat'' images from the ImageNet dataset over iterations of the optimization in \eqref{eqn: standard DIP} for the task of denoising. The measurements $\mathbf{y}$ were obtained by adding a perturbation vector drawn from a Gaussian distribution with zero mean and $25/255$ standard deviation. Vertical colored lines mark the argmax, highlighting that even within the same task, dataset, and semantics, the optimal number of iterations to reach peak PSNR varies.}
\label{fig: psnr_curve_imagenet}
\vspace{-0.7cm}
\end{figure}
%

\subsection{Theoretical Approaches to Understanding DIP}

The success of deep image prior in the field of image reconstruction has motivated theoretical studies on how over-parameterization and architecture bias affect the implicit bias. We broadly classify the existing studies into two categories: those that use the \textit{neural tangent kernel} (NTK) to simplify the training dynamics of DIP, and those that interpret DIP through the lens of \textcolor{black}{over-parameterized} matrix factorization. \textcolor{black}{In both these  settings, it has been shown that Gradient descent biases the solution towards favorable solutions that are biased to low-rank solutions, an inherent property of natural images. The NTK analysis, often referred to as ``lazy learning'', essentially describes a kernel regression with a fixed kernel evaluated at initialization. A notable disadvantage of this approach is its inability to promote feature learning due to its fixed kernel. Hence, we also discuss the implicit bias of DIP from a two-layer matrix factorization framework, which promotes feature learning by showing how the network actively discovers structured representations, corresponding to low-rank learning.}

We summarize the assumptions, main results, and limitations of these two methods in Table \ref{tab:ntk_vs_matrix_factorization}.

\begin{table}[h]
    \centering
    \begin{tabular}{|l|m{6cm}|m{6cm}|}
        \hline
        \textbf{Method} & \textbf{NTK Analysis} & \textbf{Matrix Factorization Analysis} \\
        \hline
        \textbf{Network Structure} & Linearized around initialization $f_\btheta(\z) = f_{\btheta_0}(\z) +  \mathbf{J} (\boldsymbol{\theta} -\boldsymbol{\theta}_{0})$ & Low-rank matrix factorization (\(\mathbf{X} = \mathbf{U}\mathbf{U}^\top\)) \\
        \hline
        \textbf{Assumptions on Network} & Very large width, $\mathbf{J}$ is full row rank and non-trivial null space & Hidden layer dimension much larger than true rank of the signal ($r>>n$). \\
        \hline
        \textbf{Implicit Bias} & Bias towards smooth, low-frequency reconstructions due to spectral decay. In \cite{heckel2020compressive}: With very high probability, $\|\hat{\mathbf{x}} - \mathbf{x}^*\|_2^2 \leq C \left( \sum_{i=1}^{n} \frac{1}{\sigma_i^2} \langle \mathbf{w}_i, \mathbf{x}^* \rangle^2 \right) \sum_{i > 2m/3} \sigma_i^2.$ In \cite{liang2024analysis}: three cases, bias depends on the relationship between $\A$ and $\mathbf{JJ}^\top$. & Bias towards low-rank solutions via implicit nuclear norm minimization, $   \min_{\mathbf{X} \succeq \mathbf{0}} \|\mathbf{X}\|_* \quad \text{s.t.} \quad \mathbf{A}(\mathbf{X}) = y.$\\
        \hline
        \textbf{Assumptions on forward operator} & In \cite{heckel2020compressive}: obeys restricted isometry property. In \cite{liang2024analysis}: full row rank. & Obeys restricted isometry property. The measurement operators are symmetric and commutative. \\
        \hline
        \textbf{Hyperparameter assumption} & Random Gaussian initialization and stable finite step-size $\eta < \frac{2}{\| \mathbf{J}\|_{2}}$ & Infinitesimally small random initialization and gradient flow ($\eta \rightarrow 0$).  \\
        \hline
        \textbf{Limitations} & Requires the NTK assumption, which is only valid for very wide networks. It is difficult to determine structure and properties of the NTK analytically for networks beyond two layers. & Requires two layer linear network assumption. Analysis partially breaks down with non-linear activation functions. Does not count in the effect of large width. \\
        \hline
    \end{tabular}
    \vspace{0.15cm}
    \caption{Comparison of NTK Analysis and Matrix Factorization Analysis in Deep Image Prior.}
    \vspace{-0.2in}
    \label{tab:ntk_vs_matrix_factorization}
\end{table}





\subsubsection{\textcolor{black}{Neural Tangent Kernel Analysis}}

The first line of work, represented by the theory developed in \cite{liang2024analysis,tachella2021neural}, analyzes DIP by linearizing the network $f$ around its initialization.
In particular, they assume that the network's output for a particular set of parameters $\boldsymbol{\theta}$ is well approximated by a first order Taylor expansion around the initialization $\boldsymbol{\theta}_0$:
\begin{equation}\label{eqn: NTK first eqn}
f_\btheta(\z) = f_{\btheta_0}(\z) +  \mathbf{J} (\boldsymbol{\theta} -\boldsymbol{\theta}_{0}),    
\end{equation}
where $\mathbf{J}$ is the Jacobian of $f$ with respect to $\boldsymbol{\theta}$ evaluated at $\boldsymbol{\theta}_0$, i.e. $\mathbf{J} := \nabla_\btheta f_\btheta(\z)\big\vert_{\btheta=\btheta_0}$. Under technical assumptions about the initialization of the network parameters, in the limit of infinite network width (where for CNNs ``width" corresponds to number of channels), this linearization holds exactly \cite{tachella2021neural}.

When optimizing the DIP objective in \eqref{eqn: standard DIP} with gradient descent, this linearization implies that the change in the network output at iteration $t$ is given by:
\begin{equation}
    \label{eq:lin_at_iter_t}
    f_{\btheta_{t+1}}(\z) = f_{\btheta_t}(\z) + \mathbf{J} (\btheta_{t+1} - \btheta_t),
\end{equation}
and, when using gradient descent with a step size of $\eta$ to minimize the objective in \eqref{eqn: standard DIP}, the update to the parameters is given by: 
%
\begin{align}
    \label{eq:param_diff}
    \btheta_{t+1} - \btheta_t &= \frac{-\eta}{2}\, [\nabla_\btheta ||\A f_\btheta(\z) - \y||_2^2]\big|_{\btheta=\btheta_t} = \eta \, (\nabla_\btheta f_\btheta(\z)\big\vert_{\btheta=\btheta_t})^\top(\A^\top\y - \A^\top\A f_{\btheta_t}(\z)) \nonumber \\
    &\overset{(*)}{=} \eta \,\mathbf{J}^\top(\A^\top\y - \A^\top\A f_{\btheta_t}(\z)),
\end{align}
where $(*)$ follows from the assumption that the network Jacobian does not change from its initialization. Finally, substituting the result of \eqref{eq:param_diff} into equation \eqref{eq:lin_at_iter_t} gives the following recursion for the network output at every iteration:
\begin{equation}
    \label{eq:jac-update}
    f_{\btheta_{t+1}}(\z) = f_{\btheta_t}(\z) + \eta \mathbf{J}\mathbf{J}^\top(\A^\top\y - \A^\top\A f_{\btheta_t}(\z)).
\end{equation}
The matrix $\mathbf{J}\mathbf{J}^\top$ is known as the \textit{neural tangent kernel}, and we will denote it as $\mathbf{K}$ in subsequent sections.

\textcolor{black}{We now present theoretical results based on the NTK perspective from two recent works. The first is \cite{heckel2020compressive}, which demonstrates how spectral bias emerges in image reconstruction with untrained networks for a simple two-layer architecture. In this analysis, the known spectral properties of the matrix $\mathbf{J}$ play a key role in how the optimization in \eqref{eqn: standard DIP} accurately recovers smooth signals from few measurements. We then present complementary results from \cite{liang2024analysis}, which provides a generic analysis of the iterates in equation \eqref{eq:jac-update}. This approach applies to more general network architectures, and the results reveal the importance of the relationship between the NTK and the forward operator $\A$ in enabling successful signal recovery.}

Heckel and Soltanolkotabi \cite{heckel2020compressive} analyze the NTK updates for a two layer version of the \textit{deep decoder} \cite{heckel2019deep}, a class of untrained convolutional neural networks, given by $f(\boldsymbol{\theta}) = \text{ReLU}(\mathbf{U} \boldsymbol{\theta})\mathbf{v} $
where \( \mathbf{U} \in \mathbb{R}^{n \times n} \) is a fixed convolutional operator, \( \boldsymbol{\theta} \in \mathbb{R}^{n \times k} \) is a parameter matrix, and \( \mathbf{v} \in \mathbb{R}^{k} \) is a fixed output weight vector. This model is over-parameterized, meaning \( k \gg n \), and is used in inverse problems such as compressive sensing. This two layer deep decoder does not have an explicit input, but the parameter matrix $\btheta$ can be thought of as representing the output of the first layer of a CNN with a fixed input, i.e., the fixed input $\z$ has been absorbed into the network weights $\btheta$ and is suppressed in the notation.

A key observation in \cite{heckel2020compressive} is that the Jacobian \( \mathbf{J} \) associated with this convolutional generator exhibits a highly structured spectral decomposition. Specifically, the left singular vectors of \( \mathbf{J} \) are well approximated by trigonometric basis functions (ordered from low to high frequencies), and the singular values \( \sigma_i \) decay \textit{geometrically}, i.e., $\sigma_i^2 \approx \gamma^i$ for some  $0 < \gamma < 1$. \textcolor{black}{Intuitively, we can then expect} that \textcolor{black}{when the filtering in \eqref{eq:jac-update} is performed, \textit{the}} \textit{high-frequency components \textcolor{black}{of the signal} are suppressed}, leading to an implicit spectral bias.

\textcolor{black}{We now explain the emergence of this spectral bias in more detail, and demonstrate} how this spectral bias impacts signal recovery. \textcolor{black}{To exploit our knowledge about the spectral decomposition of $\mathbf{J}$, we rewrite it} using its singular value decomposition as $\mathbf{J} = \mathbf{W} \mathbf{\Sigma} \mathbf{V}^\top$. We then consider a signal $\mathbf{x}^*$, which can then be decomposed as $\mathbf{x}^*  = \sum_{i=1}^{n} \langle \mathbf{x}^* , \mathbf{w}_i \rangle \mathbf{w}_i$, where $\mathbf{w}_i$ are the left singular vectors of $\mathbf{J}$. For the two-layer deep decoder $f(\boldsymbol{\theta})$, the left-singular vectors of $\mathbf{J}$ are well approximated by the trigonometric basis. \textcolor{black}{Hence, if $\mathbf{x}^*$ is reasonably smooth, we expect that it can be represented well in this basis with a relatively small number of components.}

\textcolor{black}{We now connect this insight to signal recovery by} considering the update \eqref{eq:jac-update}. \textcolor{black}{We also crucially use} the fact that small step-size on this least squares regression loss (with the initialization $\mathbf{c}_0 =\mathbf{0}$) leads to the minimum norm solution as follows: $\hat{\mathbf{c}} = \arg\min_{\mathbf{c}}  \|\mathbf{c} \|_{2}^2 \quad \text{subject to}  \quad \mathbf{A}\mathbf{J}\mathbf{c} = \mathbf{y},$
which has a closed form solution $ \hat{\mathbf{c}} = \mathbf{P}_{\mathbf{J}^\top \mathbf{A}^{T} }\mathbf{c}^{*}$, where $\mathbf{P}_{\mathbf{J}^\top \mathbf{A}^{T} }$ denotes the orthogonal projection operator onto the range of $(\mathbf{A}\mathbf{J})^{T}$ and $\mathbf{c}^{*}$ denotes any solution which generates the ground truth as $\mathbf{x}^{*} = \mathbf{J}\mathbf{c}^{*}$. We assume that $\mathbf{J}$ is full rank so a $\mathbf{c}^*$ exists. Then the signal estimation error is 
$\hat{\mathbf{x}} - \mathbf{x}^{*} = \mathbf{J}(\mathbf{\hat{c} - \mathbf{c}^{*}) = \mathbf{J}(\mathbf{P}_{\mathbf{J}^{T}\mathbf{A}^{T}} - \mathbf{I{}})\mathbf{c}^{*}}\:.$ 
We then expect the error in estimating $\mathbf{x}^{*}$ from the compressive measurements $\mathbf{y}=\mathbf{A}\mathbf{x}^{*}$ to be small under two conditions:
\textcolor{black}{(\textit{i})} $\mathbf{x}^{*}$ approximately lies in the span of the leading singular vectors
of $\mathbf{J}$ and \textcolor{black}{(\textit{ii})} the singular values of the matrix $\mathbf{J}$ decay sufficiently fast. \textcolor{black}{Condition (\textit{i}) is again primarily related to the smoothness of $\mathbf{x}^*$, since satisfying this condition means that $\mathbf{x}^*$ can be accurately represented by a small number of low-frequency basis functions. Assuming that this condition is satisfied,} it is reasonable to expect that there may be a particular $\mathbf{c}^*$ with small norm, because a relatively small number of coefficients are needed to accurately represent $\x^*$ in the column space of $\mathbf{W}$. \textcolor{black}{Condition (\textit{ii}) is important because it is related to the alignment between the recovered $\hat{\mathbf{c}}$ and this particular $\mathbf{c}^*$}. \textcolor{black}{In particular, if this condition is satisfied, we} expect that $\hat{\mathbf{c}}$ and $\mathbf{c}^*$ will be closely aligned. We expect this because (assuming, e.g., that $\A$ obeys the restricted isometry property) significant differences between $\hat{\mathbf{c}}$ and $\mathbf{c}^*$ will generically live in the subspace corresponding to the trailing singular values of $\mathbf{J}$ for the constraint $\A\mathbf{Jc}=\mathbf{y}$ to remain satisfied. However, such differences will be highly penalized since $\hat{\mathbf{c}}$ is the minimum-norm solution, so we can expect $\hat{\mathbf{c}} \approx \mathbf{c}^*$.

\textcolor{black}{These two conditions for accurate signal recovery may explain the strong empirical performance of the deep decoder, because} for the two layer deep decoder, the dominant left singular vectors of $\mathbf{J}$ are low-frequency trigonometric basis functions and $\mathbf{J}$'s singular values tend to decay geometrically. Since the frequency spectrum of natural images also exhibits a fast decay, we expect that both conditions will be satisfied, leading to effective signal recovery. The main theorem of \cite{heckel2020compressive} makes these notions rigorous in the case of compressed sensing with a Gaussian measurement matrix:
 
\begin{theorem}[\textcolor{black}{Theorem 1 from \cite{heckel2020compressive}}]
\label{spectral-gradient}
Let \( \mathbf{A} \in \mathbb{R}^{m \times n} \) be a random Gaussian measurement matrix with \( m \geq 12 \), and let \( \mathbf{w}_1, \dots, \mathbf{w}_n \) be the left singular vectors of the neural network Jacobian \( \mathbf{J} := \nabla_\btheta f(\btheta)\big\vert_{\btheta=\btheta_0} \), where $\btheta_0$ are the initial parameters. Let the corresponding singular values of $\mathbf{J}$ be \( \sigma_1 \geq \dots \geq \sigma_n \). Then, for any \( \mathbf{x}^* \in \mathbb{R}^n \), with probability at least \( 1 - 3e^{-1/2m} \), the reconstruction error using a gradient-descent-trained convolutional generator satisfies:
\[
\|\hat{\mathbf{x}} - \mathbf{x}^*\|_2^2 \leq C \left( \sum_{i=1}^{n} \frac{1}{\sigma_i^2} \langle \mathbf{w}_i, \mathbf{x}^* \rangle^2 \right) \sum_{i > 2m/3} \sigma_i^2.
\]
where \( C \) is a universal constant. 
\end{theorem}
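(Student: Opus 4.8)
My plan is to recast the theorem as a statement about weighted minimum-norm recovery in the singular basis of $\mathbf{J}$, and then to establish a deterministic weighted null-space inequality that holds on a high-probability event for the Gaussian matrix $\mathbf{A}$. First I would write $\mathbf{J} = \mathbf{W}\mathbf{\Sigma}\mathbf{V}^\top$ and change variables to the singular basis. Setting $\mathbf{a} = \mathbf{\Sigma}\mathbf{V}^\top\mathbf{c}$, the reconstruction is $\hat{\mathbf{x}} = \mathbf{J}\hat{\mathbf{c}} = \mathbf{W}\hat{\mathbf{a}}$, and the coefficient vector $\mathbf{a}^* = \mathbf{W}^\top\mathbf{x}^*$ has entries $a_i^* = \langle \mathbf{w}_i, \mathbf{x}^*\rangle$. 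Because the minimum-$\ell_2$-norm solution $\hat{\mathbf{c}}$ of $\mathbf{A}\mathbf{J}\mathbf{c} = \mathbf{y}$ places no mass in $\mathrm{null}(\mathbf{J})$, minimizing $\|\mathbf{c}\|_2$ is equivalent to minimizing the weighted norm $\|\mathbf{\Sigma}^{-1}\mathbf{a}\|_2$ subject to $(\mathbf{A}\mathbf{W})\mathbf{a} = \mathbf{y}$. Crucially, since $\mathbf{W}$ is orthogonal and $\mathbf{A}$ is i.i.d. Gaussian, $\tilde{\mathbf{A}} := \mathbf{A}\mathbf{W}$ has the same Gaussian law, so I may work directly with recovering $\mathbf{a}^*$ from $\tilde{\mathbf{A}}\mathbf{a}^*$ via a weighted-$\ell_2$ estimator. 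Since $\|\hat{\mathbf{x}}-\mathbf{x}^*\|_2 = \|\hat{\mathbf{a}}-\mathbf{a}^*\|_2$, the target becomes $\|\hat{\mathbf{a}} - \mathbf{a}^*\|_2^2 \le C\big(\sum_i (a_i^*)^2/\sigma_i^2\big)\sum_{i>2m/3}\sigma_i^2$.

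Next I would characterize the error $\mathbf{e} := \hat{\mathbf{a}} - \mathbf{a}^*$ geometrically. Writing $\mathbf{d} = \mathbf{\Sigma}^{-1}\mathbf{a}$ and $\mathbf{d}^* = \mathbf{\Sigma}^{-1}\mathbf{a}^*$, the estimator is the orthogonal projection $\hat{\mathbf{d}} = \mathbf{P}\,\mathbf{d}^*$ of $\mathbf{d}^*$ onto the row space of $\tilde{\mathbf{A}}\mathbf{\Sigma}$, so the residual $\hat{\mathbf{d}} - \mathbf{d}^* = -(\mathbf{I}-\mathbf{P})\mathbf{d}^*$ lies in $\mathrm{null}(\tilde{\mathbf{A}}\mathbf{\Sigma})$ and obeys $\|\hat{\mathbf{d}}-\mathbf{d}^*\|_2 \le \|\mathbf{d}^*\|_2$. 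Translating back yields the two facts I will use: (i) $\mathbf{e} \in \mathrm{null}(\tilde{\mathbf{A}})$, and (ii) $\|\mathbf{\Sigma}^{-1}\mathbf{e}\|_2 = \|\hat{\mathbf{d}}-\mathbf{d}^*\|_2 \le \|\mathbf{d}^*\|_2 = \big(\sum_i (a_i^*)^2/\sigma_i^2\big)^{1/2}$, which is exactly the square root of the first factor in the claimed bound.

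The core of the proof is then a purely deterministic weighted null-space inequality, which I would show holds on a good event for $\tilde{\mathbf{A}}$: for every $\mathbf{e}\in\mathrm{null}(\tilde{\mathbf{A}})$, $\|\mathbf{e}\|_2^2 \le C\big(\sum_{i>2m/3}\sigma_i^2\big)\|\mathbf{\Sigma}^{-1}\mathbf{e}\|_2^2$; combined with (ii) this gives the theorem. To prove it I would split coordinates into a head $S = \{1,\dots,\lfloor 2m/3\rfloor\}$ and a tail $S^c$. The tail is immediate from monotonicity of the $\sigma_i$, since $\|\mathbf{e}_{S^c}\|_2^2 = \sum_{i>|S|}\sigma_i^2 (e_i/\sigma_i)^2 \le \sigma_{|S|+1}^2\|\mathbf{\Sigma}^{-1}\mathbf{e}\|_2^2 \le \big(\sum_{i>2m/3}\sigma_i^2\big)\|\mathbf{\Sigma}^{-1}\mathbf{e}\|_2^2$. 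For the head I would use $\tilde{\mathbf{A}}\mathbf{e}=0$ to write $\tilde{\mathbf{A}}_S\mathbf{e}_S = -\tilde{\mathbf{A}}_{S^c}\mathbf{e}_{S^c}$; since $|S| = 2m/3 < m$, standard extreme-singular-value bounds for Gaussian matrices give $\sigma_{\min}(\tilde{\mathbf{A}}_S)\ge c_1 > 0$ with high probability, so $\|\mathbf{e}_S\|_2 \le c_1^{-1}\|\tilde{\mathbf{A}}_{S^c}\mathbf{e}_{S^c}\|_2$. The delicate point is bounding $\|\tilde{\mathbf{A}}_{S^c}\mathbf{e}_{S^c}\|_2$ without paying a factor of $\sqrt{n/m}$: a crude operator-norm bound is too lossy, so I would partition the tail into consecutive blocks $B_0, B_1, \dots$ each of size $|S|<m$, apply the Gaussian operator-norm bound $\|\tilde{\mathbf{A}}_{B_j}\|\le c_2$ on each, and sum $\|\tilde{\mathbf{A}}_{S^c}\mathbf{e}_{S^c}\|_2 \le c_2\sum_j \|\mathbf{e}_{B_j}\|_2 \le c_2\|\mathbf{\Sigma}^{-1}\mathbf{e}\|_2\sum_j \sigma^{\max}_{B_j}$. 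Because the singular values decay geometrically, $\sigma^{\max}_{B_j} \approx \gamma^{(|S|+j|S|)/2}$ forms a convergent geometric series dominated by $\sigma_{|S|+1}$, giving $\|\mathbf{e}_S\|_2^2 \le C_3 \sigma_{|S|+1}^2\|\mathbf{\Sigma}^{-1}\mathbf{e}\|_2^2 \le C_3\big(\sum_{i>2m/3}\sigma_i^2\big)\|\mathbf{\Sigma}^{-1}\mathbf{e}\|_2^2$. Adding the head and tail estimates proves the inequality.

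I expect the main obstacle to be precisely this control of the low-frequency (head) error: unlike the tail, it is not bounded by the minimum-norm constraint alone and genuinely requires the randomness of $\mathbf{A}$, both through the lower bound on $\sigma_{\min}(\tilde{\mathbf{A}}_S)$ and, more subtly, through the block-wise upper bounds that let the geometric decay of $\sigma_i$ absorb the potentially large dimension of the tail. Closing the argument also requires careful probability bookkeeping, combining the head-conditioning event with the (union over) tail-block operator-norm events and choosing the threshold exactly at $2m/3$ so that $\sigma_{|S|+1}^2 \le \sum_{i>2m/3}\sigma_i^2$, in order to land the stated $1-3e^{-m/2}$ probability with a universal constant $C$ independent of $n$. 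The geometric decay of $\mathbf{J}$'s singular values established earlier for the two-layer deep decoder is exactly what makes both the convergent block sum and the $n$-independent constant possible.
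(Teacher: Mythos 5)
Your reduction is sound, and it matches the route the paper itself sketches before stating the theorem (note the paper gives only this informal argument and imports the theorem from Heckel--Soltanolkotabi without proof): gradient descent from zero initialization converges to the minimum-norm solution; rotation invariance of the Gaussian lets you replace $\mathbf{A}\mathbf{W}$ by a fresh Gaussian $\tilde{\mathbf{A}}$; and the error $\mathbf{e}=\hat{\mathbf{a}}-\mathbf{a}^*$ lies in $\mathrm{null}(\tilde{\mathbf{A}})$ with $\|\mathbf{\Sigma}^{-1}\mathbf{e}\|_2 \le \big(\sum_i \langle \mathbf{w}_i,\mathbf{x}^*\rangle^2/\sigma_i^2\big)^{1/2}$, so everything hinges on a weighted null-space inequality. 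The head/tail split at $2m/3$, the tail estimate, and the lower bound $\sigma_{\min}(\tilde{\mathbf{A}}_S)\ge c_1$ for an $m\times \lfloor 2m/3\rfloor$ Gaussian block are all exactly right.

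The genuine gap is your control of $\|\tilde{\mathbf{A}}_{S^c}\mathbf{e}_{S^c}\|_2$ by blocking plus geometric decay. First, geometric decay of the $\sigma_i$ is \emph{not} a hypothesis of the theorem: it is a separately established property of the two-layer deep decoder used only to interpret the bound, and the statement claims a universal constant $C$ for an arbitrary spectrum $\sigma_1\ge\dots\ge\sigma_n$. Even granting decay, your constant involves $\big(\sum_j \gamma^{j|S|/2}\big)^2$-type factors that blow up as $\gamma \to 1$, so $C$ is not universal. Second, the union bound over roughly $n/|S|$ blocks gives failure probability on the order of $(n/m)e^{-cm}$, which cannot be bounded by $3e^{-m/2}$ uniformly in $n$; the theorem's probability is dimension-free. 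Both defects are repaired simultaneously by replacing the blocking with a single Chevet/Gordon-type bound for a Gaussian matrix with diagonal weighting: if $\mathbf{H}$ is $m\times k$ standard Gaussian and $\mathbf{D}$ is diagonal, then $\mathbb{E}\|\mathbf{H}\mathbf{D}\|_{\mathrm{op}} \le \sqrt{m}\,\|\mathbf{D}\|_{\mathrm{op}} + \|\mathbf{D}\|_F$, together with Lipschitz concentration. Applied with $\mathbf{D}=\mathbf{\Sigma}_{S^c}$ (and $\tilde{\mathbf{A}}=\mathbf{H}/\sqrt{m}$), this yields on a single event of probability $1-e^{-cm}$ that $\|\tilde{\mathbf{A}}_{S^c}\mathbf{\Sigma}_{S^c}\|_{\mathrm{op}} \le C\big(\sigma_{|S|+1} + m^{-1/2}\big(\sum_{i>|S|}\sigma_i^2\big)^{1/2}\big)$, hence $\|\mathbf{e}_S\|_2^2 \lesssim \big(\sigma_{|S|+1}^2 + \tfrac{1}{m}\sum_{i>|S|}\sigma_i^2\big)\|\mathbf{\Sigma}^{-1}\mathbf{e}\|_2^2 \le C'\big(\sum_{i>2m/3}\sigma_i^2\big)\|\mathbf{\Sigma}^{-1}\mathbf{e}\|_2^2$, with no decay assumption, no union over blocks, and a genuinely universal constant. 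This also explains structurally why the theorem's second factor is the tail \emph{sum} $\sum_{i>2m/3}\sigma_i^2$ rather than just $\sigma_{2m/3}^2$: the Frobenius term $\|\mathbf{\Sigma}_{S^c}\|_F$ is intrinsic to the Gaussian operator-norm bound, and the three high-probability events (head smallest singular value, tail operator norm, concentration) are where the ``$3$'' in $1-3e^{-m/2}$ comes from.
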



\textcolor{black}{As previously explained intuitively, the} main message of Theorem~\ref{spectral-gradient} is that the error in the recovered signal is controlled by two factors: the ``smoothness" of the signal $\x^*$ (how easily it is represented by the leading left singular vectors of $\mathbf{J}$) and the decay of the singular values $\sigma_i$. The sum $\sum_{i=1}^{n} \frac{1}{\sigma_i^2} \langle \mathbf{w}_i, \mathbf{x}^* \rangle^2$ will be small provided that $\langle \mathbf{w}_i, \mathbf{x}^* \rangle$ is small for larger $i$, i.e. $\mathbf{x}^*$ predominantly lies in the space spanned by the leading left singular vectors of $\mathbf{J}$, which are low-frequency trigonometric basis functions in the case of the deep decoder, implying that $\mathbf{x}^*$ is smooth. The second term $\sum_{i > 2m/3} \sigma_i^2$ will be small if $\mathbf{J}$ exhibits fast singular value decay (for example, geometric decay in the case of the deep decoder).





While the results of \cite{heckel2020compressive} reveal how the implicit bias of convolutional generators can enable signal recovery, this analysis is limited to compressed sensing with a Gaussian measurement matrix. Additional works \cite{tachella2021neural,liang2024analysis} directly analyze the recursive updates in equation \eqref{eq:jac-update} to obtain signal recovery guarantees. These approaches enable the use of more general forward operators $\mathbf{A}$. When analyzing equation \eqref{eq:jac-update}, it is natural to believe that the relationship between the NTK $\mathbf{K} = \mathbf{JJ}^\top$ and the forward operator $\mathbf{A}$ will play a crucial role in the dynamics of signal recovery with DIP. \textcolor{black}{Indeed, in} \cite{liang2024analysis}, it is shown that there are 3 regimes for signal recovery under the NTK assumption, depending on this relationship. We now state the main theorem from \cite{liang2024analysis}, which describes the three cases. \textcolor{black}{We then provide intuition about when these conditions may be approximately satisfied in realistic scenarios.} A particularly intriguing feature of this analysis is that it is able to provide a condition for exact recovery of the underlying signal.

\begin{theorem}[\textcolor{black}{Theorem~1 from \cite{liang2024analysis}}]
    \label{thm:noiseless_recovery}
    Let $\mathbf{A} \in \mathbb{R}^{m \times n}$ be of full row rank. Suppose that $f_{\boldsymbol{\theta}_0}(\z) = \boldsymbol{0}$, and let $f_{\boldsymbol{\theta}_\infty}(\z)$ be the reconstruction as the number of gradient updates approaches infinity. Let $\mathbf{x} \in \mathbb{R}^n$ be the true signal and the measurements are assumed noise-free so that $\mathbf{y} = \mathbf{Ax}$. If the step size $\eta < \frac{2}{||\mathbf{B}||}$, where $\mathbf{B} := \mathbf{K}^{1/2}\mathbf{A}^\top\mathbf{A}\mathbf{K}^{1/2}$, where the $\mathbf{K}$ is the NTK, i.e. $\mathbf{K} := \left( \nabla_\btheta f_\btheta(\z)\big\vert_{\btheta=\btheta_0} \right)\left( \nabla_\btheta f_\btheta(\z)\big\vert_{\btheta=\btheta_0} \right)^\top$
    , then:
    \begin{enumerate}
        \item If the NTK $\mathbf{K}$ is non-singular, then the difference $f_{\boldsymbol{\theta}_\infty}(\z) - \mathbf{x} \in N(\mathbf{A})$, where $N(\mathbf{A})$ denotes the null space of $\mathbf{A}$. Moreover, provided the projection $P_{N(\mathbf{A})}\mathbf{x} \neq \boldsymbol{0}$, the error $f_{\boldsymbol{\theta}_\infty}(\z) - \mathbf{x} \neq \boldsymbol{0}$.
        \item If $\mathbf{K}$ is singular and $P_{N(\mathbf{A}) \cap R(\mathbf{K})} \mathbf{x} = \boldsymbol{0}$, then the error $f_{\boldsymbol{\theta}_\infty}(\z) - \mathbf{x}$ will depend only on $P_{N(\mathbf{K})}\mathbf{x}$, in particular $f_{\boldsymbol{\theta}_\infty}(\z) - \mathbf{x} = -P_{N(\mathbf{K})}\mathbf{x} + \mathbf{K}(\mathbf{A}\mathbf{K}^{1/2})^\dagger\mathbf{A}P_{N(\mathbf{K})}\mathbf{x}$, where $ R(\mathbf{K})$ denotes the range space of $\mathbf{K}$. 
        \item If $\mathbf{K}$ is singular, $P_{N(\mathbf{A}) \cap R(\mathbf{K})} \mathbf{x} = \boldsymbol{0}$, and $\mathbf{x}\in R(\mathbf{K})$, then the reconstruction is exact, with $f_{\boldsymbol{\theta}_\infty}(\z) = \mathbf{x}$.
    \end{enumerate}
\end{theorem}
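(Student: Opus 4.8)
The plan is to track the evolution of the network output $\mathbf{u}_t := f_{\boldsymbol{\theta}_t}(\z)$ directly, rather than the parameters. Substituting the noise-free model $\mathbf{y}=\mathbf{A}\mathbf{x}$ into the linearized recursion \eqref{eq:jac-update} and writing the error as $\mathbf{e}_t := \mathbf{u}_t - \mathbf{x}$, I would first establish the clean error dynamics
\begin{equation}
\mathbf{e}_{t+1} = (\mathbf{I} - \eta\, \mathbf{K}\mathbf{A}^\top\mathbf{A})\,\mathbf{e}_t, \qquad \mathbf{e}_0 = -\mathbf{x},
\end{equation}
using the zero-initialization hypothesis $f_{\boldsymbol{\theta}_0}(\z)=\mathbf{0}$. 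The central difficulty is that the propagation matrix $\mathbf{K}\mathbf{A}^\top\mathbf{A}$ is \emph{not symmetric}, so its powers are not controlled by a naive spectral argument and convergence of the iteration is not even clear a priori.

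To overcome this, the key step is a symmetrizing change of variables. Because every update in \eqref{eq:jac-update} lies in $R(\mathbf{K})$ and $\mathbf{u}_0=\mathbf{0}$, I can write $\mathbf{u}_t = \mathbf{K}^{1/2}\mathbf{c}_t$ and verify that the original recursion is reproduced exactly by $\mathbf{c}_{t+1} = \mathbf{c}_t + \eta\,(\mathbf{A}\mathbf{K}^{1/2})^\top(\mathbf{y} - \mathbf{A}\mathbf{K}^{1/2}\mathbf{c}_t)$ with $\mathbf{c}_0=\mathbf{0}$. This is precisely gradient descent on the least-squares objective $\tfrac12\|\mathbf{A}\mathbf{K}^{1/2}\mathbf{c}-\mathbf{y}\|_2^2$, whose Hessian is the symmetric positive semidefinite matrix $\mathbf{B} = \mathbf{K}^{1/2}\mathbf{A}^\top\mathbf{A}\mathbf{K}^{1/2}$. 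Here the step-size hypothesis $\eta < 2/\|\mathbf{B}\|$ does its work: the nonzero eigenvalues of $\mathbf{B}$ then satisfy $|1-\eta\lambda|<1$, the component of $\mathbf{c}_t$ in $N(\mathbf{B})$ stays frozen at $\mathbf{0}$ (since the forcing term lies in $R(\mathbf{B})$), and standard least-squares theory gives the minimum-norm limit $\mathbf{c}_\infty = (\mathbf{A}\mathbf{K}^{1/2})^\dagger\mathbf{y}$. Undoing the substitution yields the closed form $f_{\boldsymbol{\theta}_\infty}(\z) = \mathbf{K}^{1/2}(\mathbf{A}\mathbf{K}^{1/2})^\dagger\mathbf{A}\mathbf{x}$, from which all three cases are read off.

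The remaining work is to specialize this closed form. Since $\mathbf{u}_\infty \in R(\mathbf{K})$, the component of the error along $N(\mathbf{K})$ is frozen at $-P_{N(\mathbf{K})}\mathbf{x}$, which already produces the leading term of Case 2 and, when $\mathbf{x}\in R(\mathbf{K})$ in Case 3, forces that contribution to vanish. For Case 1, $\mathbf{K}^{1/2}$ is invertible so $\mathbf{A}\mathbf{K}^{1/2}$ inherits full row rank; one then checks $f_{\boldsymbol{\theta}_\infty}(\z) = \mathbf{\Pi}\mathbf{x}$ with $\mathbf{\Pi}=\mathbf{K}\mathbf{A}^\top(\mathbf{A}\mathbf{K}\mathbf{A}^\top)^{-1}\mathbf{A}$, the oblique projection onto $R(\mathbf{K}\mathbf{A}^\top)$ along $N(\mathbf{A})$. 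Hence $\mathbf{A}\,\mathbf{u}_\infty=\mathbf{y}$, the error $(\mathbf{\Pi}-\mathbf{I})\mathbf{x}$ lies in $N(\mathbf{A})$, and it vanishes exactly when $\mathbf{x}\in R(\mathbf{K}\mathbf{A}^\top)$, giving the ``moreover'' claim for a signal with a genuine visible null-space component. For Cases 2 and 3, I would split $\mathbf{x}=P_{R(\mathbf{K})}\mathbf{x}+P_{N(\mathbf{K})}\mathbf{x}$ and argue that $P_{N(\mathbf{A})\cap R(\mathbf{K})}\mathbf{x}=\mathbf{0}$ guarantees exact reconstruction of the range component $P_{R(\mathbf{K})}\mathbf{x}$, leaving only the explicit cross term $\mathbf{K}(\mathbf{A}\mathbf{K}^{1/2})^\dagger\mathbf{A}P_{N(\mathbf{K})}\mathbf{x}$.

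I expect the main obstacle to be twofold. The conceptual obstacle is the non-symmetry of $\mathbf{K}\mathbf{A}^\top\mathbf{A}$, which the $\mathbf{K}^{1/2}$-conjugation resolves and which is really the engine of the whole argument. The technical obstacle lies in Cases 2 and 3: showing that $P_{N(\mathbf{A})\cap R(\mathbf{K})}\mathbf{x}=\mathbf{0}$ is exactly the condition under which $\mathbf{A}\mathbf{K}^{1/2}$ recovers $P_{R(\mathbf{K})}\mathbf{x}$ requires careful pseudoinverse manipulation together with the orthogonal decomposition $R(\mathbf{K}) = R((\mathbf{A}\mathbf{K}^{1/2})^\top)\oplus\big(N(\mathbf{A}\mathbf{K}^{1/2})\cap R(\mathbf{K})\big)$, and it is here that the precise interaction between the NTK and the forward operator must be handled delicately to match the stated error formulas.
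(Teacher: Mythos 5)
You cannot be compared against the paper's own argument here, because the paper never proves this theorem---it imports it verbatim from \cite{liang2024analysis} and only discusses its interpretation---so your proposal has to be judged on its own merits. On those merits, your engine is correct and is clearly the intended one: the error recursion, the substitution $\mathbf{u}_t=\mathbf{K}^{1/2}\mathbf{c}_t$, the identification of the $\mathbf{c}$-iteration as gradient descent on $\tfrac12\|\A\mathbf{K}^{1/2}\mathbf{c}-\y\|_2^2$ (exactly what the step-size condition $\eta<2/\|\mathbf{B}\|$ is calibrated for), the minimum-norm limit $\mathbf{c}_\infty=(\A\mathbf{K}^{1/2})^\dagger\y$, and hence the closed form $f_{\btheta_\infty}(\z)=\mathbf{K}^{1/2}(\A\mathbf{K}^{1/2})^\dagger\A\x$ are all right, as is the first claim of Case 1.

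The genuine gap lies in the two steps you defer, and neither can be completed as you describe, because the statement's hypotheses are Euclidean-orthogonal while the geometry of your closed form is $\mathbf{K}$-oblique. For Case 1 you correctly show the error vanishes iff $\x\in R(\mathbf{K}\A^\top)$, but this does not yield the stated ``moreover'': $R(\mathbf{K}\A^\top)$ is an \emph{oblique} complement of $N(\A)$, so it contains vectors with $P_{N(\A)}\x\neq\mathbf{0}$. Take $\A=(1,\ 0)$, $\mathbf{K}=\left(\begin{smallmatrix}1&1/2\\[2pt]1/2&1\end{smallmatrix}\right)$, $\x=(1,\ 1/2)^\top$: recovery is exact although $P_{N(\A)}\x\neq\mathbf{0}$, so your phrase ``a signal with a genuine visible null-space component'' cannot be made precise in a way that rescues the claim. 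The same mismatch defeats your plan for Cases 2--3: from your own closed form, exact recovery of the range component holds iff $(\mathbf{K}^{1/2})^\dagger\x\perp N(\A\mathbf{K}^{1/2})$, equivalently $\x\perp\mathbf{K}^\dagger\bigl(N(\A)\cap R(\mathbf{K})\bigr)$, a $\mathbf{K}$-weighted condition that is \emph{not} implied by $P_{N(\A)\cap R(\mathbf{K})}\x=\mathbf{0}$. Concretely, let $\mathbf{K}$ act as $\left(\begin{smallmatrix}2&1\\[2pt]1&2\end{smallmatrix}\right)$ on $\mathrm{span}(\mathbf{e}_1,\mathbf{e}_2)$ and vanish on $\mathbf{e}_3$, let $\A$ have rows $\mathbf{e}_1^\top,\mathbf{e}_3^\top$, and let $\x=\mathbf{e}_1$: then $\mathbf{K}$ is singular, $\x\in R(\mathbf{K})$, and $\x\perp N(\A)\cap R(\mathbf{K})=\mathrm{span}(\mathbf{e}_2)$, so all hypotheses of Case 3 hold, yet for every admissible $\eta$ the iteration converges to $(1,\ 1/2,\ 0)^\top\neq\x$. (You would also find your derivation produces a $\mathbf{K}^{1/2}$, not $\mathbf{K}$, prefactor on the cross term of Case 2.) So the ``careful pseudoinverse manipulation'' you anticipate is not merely delicate---under the standard reading of the projections it is impossible; a correct completion must either strengthen the hypothesis to $N(\A)\cap R(\mathbf{K})=\{\mathbf{0}\}$ (under which your decomposition $R(\mathbf{K})=R(\mathbf{K}^{1/2}\A^\top)\oplus\bigl(N(\A\mathbf{K}^{1/2})\cap R(\mathbf{K})\bigr)$ does close the argument, since the second summand becomes trivial) or restate the recovery conditions in the $\mathbf{K}^\dagger$-weighted form above.
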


\textcolor{black}{The result in the first case tells us that if the NTK is non-singular, the error in the limiting reconstruction lies entirely in the null space of the forward operator. However, this is also true for many simple reconstruction methods, such as the pseudoinverse reconstruction $\A^\dagger\y$. The more interesting observation is that it also tells us that in this case the signal $\x$ can \textit{never} be perfectly recovered if any part of $\x$ lies in the null space of $\A$. In the second case, we can again intuit that $\x$ having little content in $N(\A)$ may be important for accurate signal recovery (specifically the subspace $N(\A) \cap R(\mathbf{K})$). Moreover, the null space of the NTK now also plays an important role, and we would expect a small error in the recovered signal if $P_{N(\mathbf{K})}\x$ is small. The third case provides a condition for exact signal recovery. It is effectively a corollary of the second case, since $R(\mathbf{K})$ is the orthogonal complement of $N(\mathbf{K})$ because $\mathbf{K}$ is symmetric.}

\textcolor{black}{This theorem tells us that exact signal recovery requires two important conditions. First, the NTK must be able to accurately represent $\x$ (i.e. $\x \in R(\mathbf{K})$). Moreover, the NTK and the forward operator $\A$ must be sufficiently \textit{incoherent} with each other, i.e. the subspaces $N(\A)$ and $R(\mathbf{K})$ are (mis)aligned such that $P_{N(\mathbf{A}) \cap R(\mathbf{K})} \mathbf{x} = \boldsymbol{0}$. A simple example of a case where these conditions could hold is as follows (adapted from \cite{liang2024analysis}). Suppose that $\x$ consists of a small number of non-bandlimited wavelet components, and that the NTK can represent this signal ($\x \in R(\mathbf{K})$), and suppose that $\A$ samples a range of low-frequency Fourier modes. If the wavelets that make up $\x$ cannot be linearly combined to form a bandlimited signal (which would be in $N(\A))$, then one would have $P_{N(\mathbf{A}) \cap R(\mathbf{K})} \mathbf{x} = \boldsymbol{0}$, enabling exact recovery.}

\textcolor{black}{We further note that the conditions of incoherence and the ability to represent $\x$ in $R(\mathbf{K})$ are very similar to the assumptions of Theorem \ref{spectral-gradient}, which relies on the restricted isometry property (guaranteeing some level of incoherence between $\A$ and $\mathbf{J}$ or $\mathbf{K}$) and the ability to compactly represent $\x$ in the leading singular vectors of the network Jacobian $\mathbf{J}$, which has the same range space as $\mathbf{K}$. The fact that these assumptions appear in both theoretical analyses further underscores their importance for understanding the mechanisms present in DIP.}

\textcolor{black}{However, while both Theorems \ref{spectral-gradient} and \ref{thm:noiseless_recovery} provide important insights into how DIP enables signal recovery, neither of these theorems addresses the more realistic case where the measurements $\y$ are corrupted by noise.}
The \textcolor{black}{over-fitting} of DIP in the presence of noise can be interpreted as the bias-variance tradeoff present in classical image filtering algorithms. Indeed, repeatedly applying the update \eqref{eq:jac-update} is a well known procedure often called ``twicing" \cite{imagefiltering}. By employing the decomposition of mean squared error (MSE) as the sum of the bias and variance of the estimator, the NTK analysis can be extended to the setting where the measurements $\mathbf{y}$ are corrupted by noise. The following theorem provides a formula for computing the MSE of image reconstruction with DIP in this setting, where the first term comes from the bias and the second from the variance.

\begin{theorem}[\textcolor{black}{Theorem~2 from \cite{liang2024analysis}}]
    Let $\mathbf{A}\in\mathbb{R}^{m\times n}$ be full row rank, and let $\mathbf{K} := \left( \nabla_\btheta f_\btheta(\z)\big\vert_{\btheta=\btheta_0} \right)\left( \nabla_\btheta f_\btheta(\z)\big\vert_{\btheta=\btheta_0} \right)^\top$ be the network's NTK. Suppose that the acquired measurements are $\mathbf{y} = \mathbf{Ax}+\mathbf{n}$, where $\mathbf{n} \sim \mathcal{N}(\boldsymbol{0}, \sigma^2\mathbf{I})$ and $\x \in \mathbb{R}^n$ is the true underlying signal. Then the MSE for DIP-based image reconstruction at iteration $t$ is given by:
    \[\textrm{MSE}_t = ||(\mathbf{I} - \eta \mathbf{K}\mathbf{A}^\top\mathbf{A})^t\mathbf{x}||_2^2 + \sigma^2 \sum_{i=1}^m \nu_{t,i}^2\:,\]
    where $\nu_{t, i}$ are the singular values of the matrix $(\mathbf{I} - (\mathbf{I} - \eta \mathbf{K}\mathbf{A}^\top\mathbf{A})^t)\mathbf{A}^\dagger$.
\end{theorem}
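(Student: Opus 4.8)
The plan is to exploit the exact affine recursion for the network output implied by the NTK linearization, then split the expected squared error into a deterministic bias term and a stochastic variance term. Writing $\mathbf{x}_t := f_{\btheta_t}(\z)$ and setting $\mathbf{M} := \mathbf{I} - \eta\mathbf{K}\A^\top\A$, the recursion \eqref{eq:jac-update} becomes $\mathbf{x}_{t+1} = \mathbf{M}\mathbf{x}_t + \eta\mathbf{K}\A^\top\y$. Starting from the initialization $f_{\btheta_0}(\z)=\mathbf{0}$ (inherited from the setup of Theorem \ref{thm:noiseless_recovery}), I would unroll this recursion to obtain the closed form $\mathbf{x}_t = \eta\bigl(\sum_{j=0}^{t-1}\mathbf{M}^j\bigr)\mathbf{K}\A^\top\y$.

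The key algebraic step is to substitute $\y = \A\x + \mathbf{n}$ and simplify the signal term by telescoping. Since $\eta\mathbf{K}\A^\top\A = \mathbf{I} - \mathbf{M}$, the signal contribution is $\eta\bigl(\sum_{j=0}^{t-1}\mathbf{M}^j\bigr)\mathbf{K}\A^\top\A\x = \bigl(\sum_{j=0}^{t-1}(\mathbf{M}^j - \mathbf{M}^{j+1})\bigr)\x = (\mathbf{I} - \mathbf{M}^t)\x$, which collapses cleanly. Denoting the noise operator $\mathbf{G}_t := \eta\bigl(\sum_{j=0}^{t-1}\mathbf{M}^j\bigr)\mathbf{K}\A^\top$, the reconstruction error is $\mathbf{x}_t - \x = -\mathbf{M}^t\x + \mathbf{G}_t\mathbf{n}$. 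Taking $\mathrm{MSE}_t = \mathbb{E}\|\mathbf{x}_t - \x\|_2^2$, the cross term vanishes because $\mathbb{E}[\mathbf{n}]=\mathbf{0}$, leaving the bias $\|\mathbf{M}^t\x\|_2^2 = \|(\mathbf{I}-\eta\mathbf{K}\A^\top\A)^t\x\|_2^2$ and the variance $\mathbb{E}\|\mathbf{G}_t\mathbf{n}\|_2^2 = \sigma^2\,\mathrm{tr}(\mathbf{G}_t^\top\mathbf{G}_t) = \sigma^2\|\mathbf{G}_t\|_F^2 = \sigma^2\sum_{i=1}^m\nu_{t,i}^2$, where $\nu_{t,i}$ are the singular values of $\mathbf{G}_t$.

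The one genuinely nontrivial step, and the only place where the full row rank hypothesis on $\A$ enters, is identifying $\mathbf{G}_t$ with the matrix $(\mathbf{I} - (\mathbf{I}-\eta\mathbf{K}\A^\top\A)^t)\A^\dagger$ appearing in the statement. The same telescoping computation gives $\mathbf{G}_t\A = \mathbf{I} - \mathbf{M}^t$; right-multiplying by $\A^\dagger$ and using that $\A\A^\dagger = \A\A^\top(\A\A^\top)^{-1} = \mathbf{I}_m$ when $\A$ has full row rank yields $\mathbf{G}_t = \mathbf{G}_t(\A\A^\dagger) = (\mathbf{G}_t\A)\A^\dagger = (\mathbf{I}-\mathbf{M}^t)\A^\dagger$, exactly the claimed form. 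I expect this identity to be the main obstacle to verify carefully, since it requires tracking the block dimensions and the associativity of the products ($\mathbf{G}_t$ is $n\times m$, while $\A\A^\dagger$ is the $m\times m$ identity), and it is precisely where invertibility of $\A\A^\top$ is indispensable. Once this identification is established, the variance term follows at once from the fact that the squared Frobenius norm equals the sum of the squared singular values, completing the proof.
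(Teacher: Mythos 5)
Your proof is correct and takes essentially the route the paper itself sketches: unrolling the NTK recursion \eqref{eq:jac-update} from the zero initialization, telescoping via $\eta\mathbf{K}\A^\top\A = \mathbf{I}-\mathbf{M}$, and applying the bias--variance decomposition of the MSE, with full row rank of $\A$ entering only through $\A\A^\dagger=\mathbf{I}_m$ to identify the noise operator with $(\mathbf{I} - (\mathbf{I}-\eta\mathbf{K}\A^\top\A)^t)\A^\dagger$. Note the paper states this theorem without a detailed proof (deferring to \cite{liang2024analysis}), but your derivation, including the explicit flagging of the inherited assumption $f_{\btheta_0}(\z)=\mathbf{0}$ that the statement leaves implicit, is exactly the argument the surrounding discussion implies.
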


Finally, we empirically demonstrate that the analysis in the NTK regime \textcolor{black}{has the potential to} provide useful insights into the training dynamics of real networks. In Fig. \ref{fig:1d_experiment}, we show the results of denoising a 1D square signal using DIP with a 1D CNN. The network used is a 3-layer CNN with 256 hidden channels and ReLU activations. We also computed the empirical NTK $\mathbf{K}$ at initialization. We compare the peak-signal-to-noise ratio of the denoised signal obtained by the optimization \eqref{eqn: standard DIP} with the filtering in \eqref{eq:jac-update}. We find that closed-form filtering and real DIP show very similar behavior. \textcolor{black}{This result is interesting because it shows that the NTK perspective on DIP can effectively explain the phenomenon of early signal recovery followed by over-fitting, at least in this simple setting. We note that whether or not these conclusions extend to the networks typically used in DIP for 2D image reconstruction, such as U-Nets, is an important open question. Some notable results in this direction were obtained in \cite{tachella2021neural}, but the empirical investigation of analytical filtering with the NTK was limited to relatively simple network architectures.}

We also plot the singular values of $\mathbf{K}$, finding that the NTK's singular values decay quickly. Furthermore, $\mathbf{K}$ is poorly conditioned; even for this relatively small network the condition number of $\mathbf{K}$ is greater than $4 \times 10^3$. We also found that the condition number of the NTK grows quickly with network depth. For example, the NTK of a network with the same architecture but 15 hidden layers had a condition number greater than $10^{10}$. The fact that the NTK for deep networks is nearly singular has important \textcolor{black}{practical} implications. \textcolor{black}{In particular, it tells us that results (2) and (3) of Theorem \ref{thm:noiseless_recovery} may hold approximately for real image reconstruction with DIP using deep networks, since both of these results require the NTK to be singular. Moreover, it also suggests that the top singular vectors of $\mathbf{K}$ may dominate the reconstruction process in earlier iterations (perhaps promoting signal recovery). We would expect this because $\mathbf{K}$ is symmetric, so writing its singular value decomposition as $\mathbf{V}\mathbf{\Sigma}\mathbf{V}^\top$ shows that applying this matrix to a signal will tend to amplify frequencies or signal content aligned with the top singular vectors in $\mathbf{V}$, since the spectrum of $\mathbf{K}$ decays quickly.
On the other hand, after performing such a filtering many times, the trailing singular vectors may have a relatively larger effect on the gradient descent updates if the current reconstruction already approximately lies in the subspace spanned by the top singular vectors. This would lead one to anticipate successful early signal recovery, followed by performance degradation, depending on the alignment between the singular vectors of $\mathbf{K}$ and the true signal.}

For example, case (3) provides a condition for exact recovery under three conditions: (i) the NTK $\mathbf{K}$ is singular, (ii) $P_{N(\mathbf{A}) \cap R(\mathbf{K})} \mathbf{x} = \boldsymbol{0}$, and (iii) $\mathbf{x}\in R(\mathbf{K})$. Condition (i) indicates that the poor conditioning (or near low-rankness) of the NTK may be key to the success of DIP in image reconstruction. Condition (ii) is related to the information content of the measurements $\y$, or the ``incoherence" between the NTK and the forward operator, in the sense that the NTK does not readily produce signals in the null space of $\A$. Finally, condition (iii) relates to the ability of the NTK to represent the true signal.


\textcolor{black}{
Although NTK analysis reveals that gradient descent in the infinite-width limit inherently favors low-frequency components, this bias originates directly from the fixed kernel evaluated at initialization. While this allows for smooth approximations, NTK analysis critically overlooks the active role of over-parameterization effects introduced by network depth. More importantly, it does not promote feature learning, that is, the network weights stay too close to the initialization. In contrast, recent work on the implicit bias of gradient flow in two-layer matrix factorization addresses how depth influences image reconstruction by allowing for and promoting active  learning.}

\begin{figure}
    \centering
    \begin{tabular}{ccc}
        \includegraphics[width=0.3\linewidth]{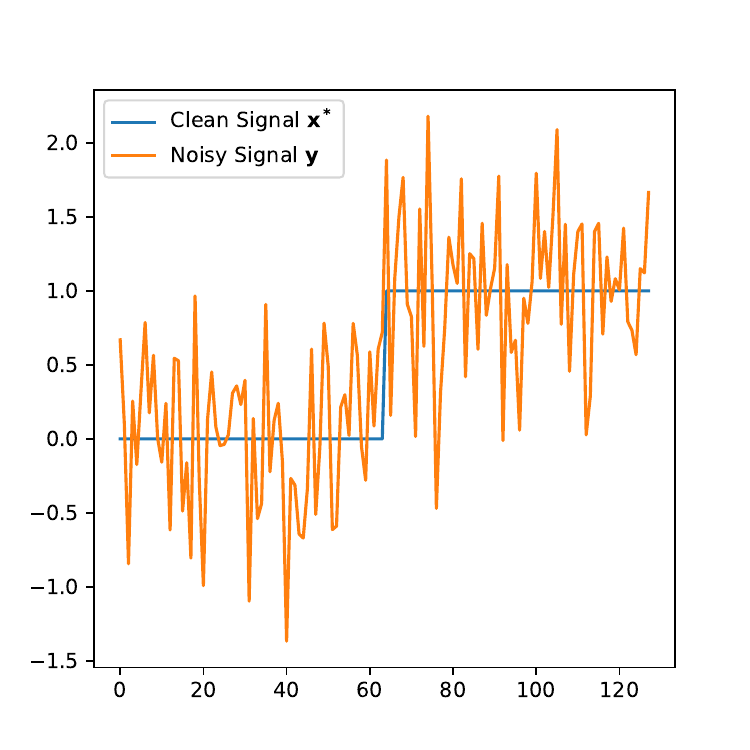}
        & \includegraphics[width=0.3\linewidth]{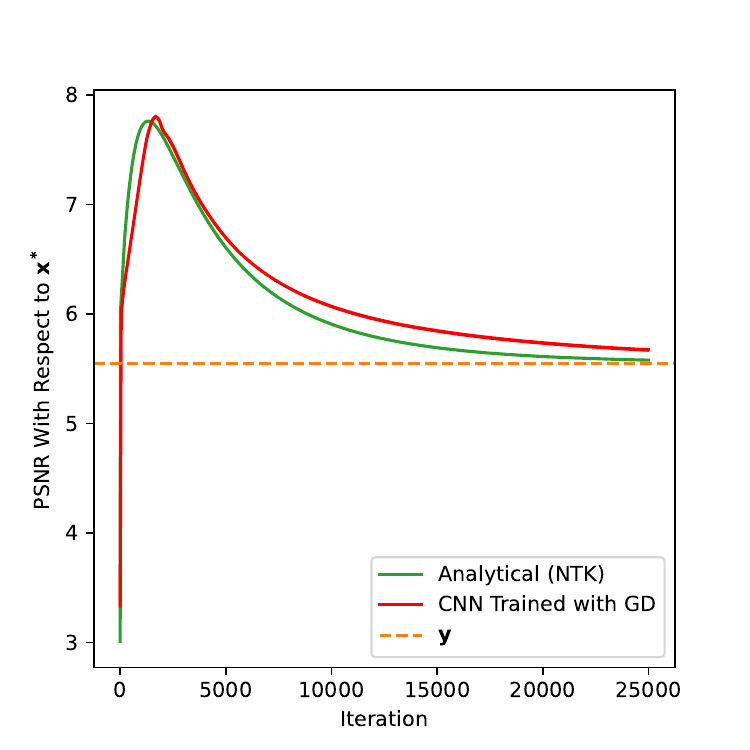}
        & \includegraphics[width=0.3\linewidth]{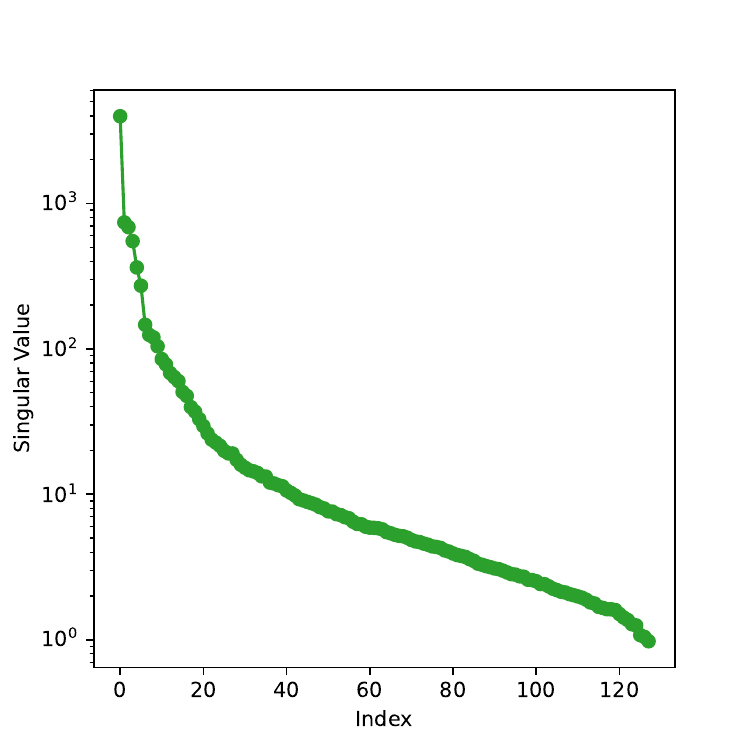}
    \end{tabular}
    \vspace{-0.4cm}
    \caption{1D signal denoising experiment using DIP. Left to right: (a) the clean and noisy signals used in the experiment, (b) PSNR of the denoised signal over iterations by training a 1D CNN with gradient descent vs. applying the update in equation \eqref{eq:jac-update} with its NTK, and (c) singular values of the NTK. The theoretical prediction aligns closely with the behavior of the real network.}
    \label{fig:1d_experiment}
    \vspace{-0.6cm}
\end{figure}


\subsubsection{\textcolor{black}{Burer-Monteiro Factorization and Two-Layer Matrix Factorization in Deep Image Prior}}

Another line of inquiry interprets DIP through the lens of \textcolor{black}{over-parameterized} matrix factorization, represented by works including \cite{ding2209validation,you2020robust}.
The deep image prior (DIP) framework is inherently \textcolor{black}{over-parameterized}, as the number of network parameters is significantly larger than the number of image pixels. Despite this \textcolor{black}{over-parameterization}, DIP exhibits a strong inductive bias towards natural images, preventing it from learning arbitrary noise. This phenomenon has been studied through the lens of implicit bias in optimization, particularly in \textcolor{black}{over-parameterized} models where gradient descent exhibits a preference for structured solutions. One such theoretical explanation comes from low-rank factorization methods, particularly the Burer-Monteiro (BM) factorization, which provides insights into how \textcolor{black}{over-parameterized} neural networks have an implicit bias towards low-rank solutions.

A common approach to studying implicit bias in \textcolor{black}{over-parameterized} optimization is through the matrix factorization model, where the image or signal is parameterized as $\mathbf{X} = \mathbf{U}\mathbf{U}^\top$, with $\mathbf{U} \in \mathbb{R}^{n \times r}$. This formulation replaces the explicit optimization over $\mathbf{X}$ with a factored representation, introducing \textcolor{black}{over-parameterization}. Given measurements $\mathbf{y} = \mathbf{A}(\mathbf{X}^*)$, where $\mathbf{X}^*$ is the ground-truth low-rank image, the problem is formulated as minimizing the least squares objective:
\begin{equation}
\label{BM-fit}
\min_{\mathbf{U}} \frac{1}{2m} \| \mathbf{A}(\mathbf{U}\mathbf{U}^\top) - \mathbf{y} \|_2^2, 
\end{equation}
where $\mathbf{A}: \mathbb{R}^{n \times n} \to \mathbb{R}^{m}$ is the measurement operator, and $m$ represents the number of compressive measurements. Here the linear measurements $\mathbf{y}=\{y_{i}\}_{i=1}^m$ are generated linearly as $y_{i} = \langle \mathbf{A}_i,\mathbf{X}^{*}\rangle$ for $i=1,2,...,m$. 

This formulation is \textcolor{black}{over-parameterized} when $r > \text{rank}(\mathbf{X}^*)$, meaning $\mathbf{U}$ has more columns than necessary. Despite this, gradient descent on this objective exhibits a strong implicit bias towards low-rank solutions. \textcolor{black}{The central idea is that the gradient flow dynamics, particularly when initialized with a very small initialization scale implicitly guide the solution towards a minimum nuclear norm solution. This is because the optimization trajectory, under certain conditions on the measurement operator $\mathbf{A}$, is constrained in a way that aligns with the Karush-Kuhn-Tucker (KKT) conditions of the nuclear norm minimization problem.}

\begin{theorem}[\textcolor{black}{Theorem~1 from \cite{gunasekar2017implicit}}]
\label{thm:commute}
Let $\mathbf{A}:\mathbb{R}^{n\times n}\to\mathbb{R}^m$ be a linear measurement operator defined by $(\mathbf{A}(\mathbf{X}))_i = \langle \mathbf{A}_i,\mathbf{X}\rangle$ for $i=1,\dots,m,$ where each $\mathbf{A}_i$ is a real symmetric matrix, and all $\mathbf{A}_i$ commute (that is, $\mathbf{A}_i\mathbf{A}_j = \mathbf{A}_j\mathbf{A}_i$ for every $i,j$). For a small scalar $\alpha>0$, define the scaled initialization $\mathbf{X}_{\alpha}(0) =\alpha\mathbf{X}_0$. Suppose that, starting from $\mathbf{X}_{\alpha}(0)$ and running gradient flow on loss \eqref{BM-fit} with $\mathbf{X}=\mathbf{U}\mathbf{U}^\top$, we converge (as $t\to\infty$) to a global minimizer $\mathbf{X}_{\alpha}(\infty)$. Assume there is a well-defined limit
$
\widehat{\mathbf{X}}
=
\lim_{\alpha\to 0}\,\mathbf{X}_{\alpha}(\infty)$
and 
$\mathbf{A}(\widehat{\mathbf{X}})=\mathbf{y}$.
Then $\widehat{\mathbf{X}}$ is a solution to the following convex problem:
$
\min_{\mathbf{X}\succeq \mathbf{0}}\;\|\mathbf{X}\|_*
\quad
\text{subject to}
\quad
\mathbf{A}(\mathbf{X})\;=\;\mathbf{y}.
$
\end{theorem}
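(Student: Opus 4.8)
The plan is to certify optimality of the limit point $\widehat{\mathbf{X}}$ directly through the Karush–Kuhn–Tucker (KKT) conditions of the convex program. Because $\|\mathbf{X}\|_* = \mathrm{tr}(\mathbf{X})$ on the PSD cone, minimizing the nuclear norm subject to $\mathbf{X}\succeq\mathbf{0}$ is a trace minimization, whose KKT system reduces to: there exists a dual vector $\boldsymbol{\nu}\in\mathbb{R}^m$ with $\mathbf{A}^*(\boldsymbol{\nu})\preceq\mathbf{I}$ (dual feasibility) and $(\mathbf{I}-\mathbf{A}^*(\boldsymbol{\nu}))\widehat{\mathbf{X}}=\mathbf{0}$ (complementary slackness), where $\mathbf{A}^*(\boldsymbol{\nu}):=\sum_{i=1}^m \nu_i\mathbf{A}_i$ is the adjoint. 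Since $\widehat{\mathbf{X}}$ is PSD (a limit of the PSD iterates $\mathbf{U}\mathbf{U}^\top$) and feasible by hypothesis, exhibiting such a $\boldsymbol{\nu}$ completes the proof. The entire task is therefore to read this certificate off the gradient-flow trajectory.

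First I would write the gradient flow. With residual $\mathbf{r}(t):=\mathbf{A}(\mathbf{U}\mathbf{U}^\top)-\mathbf{y}$ and using the symmetry of each $\mathbf{A}_i$, the flow is $\dot{\mathbf{U}} = -\tfrac{2}{m}\mathbf{A}^*(\mathbf{r})\mathbf{U}$, which induces $\dot{\mathbf{X}} = -\tfrac{2}{m}\bigl(\mathbf{A}^*(\mathbf{r})\mathbf{X}+\mathbf{X}\mathbf{A}^*(\mathbf{r})\bigr)$ on $\mathbf{X}=\mathbf{U}\mathbf{U}^\top$. The decisive step is to invoke commutativity: since all $\mathbf{A}_i$ are symmetric and commute, they share a common orthonormal eigenbasis $\mathbf{Q}$, so every $\mathbf{A}^*(\mathbf{r}(s))$ is diagonal in that basis and the family $\{\mathbf{A}^*(\mathbf{r}(s))\}_{s\ge0}$ mutually commutes. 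The time-ordered exponential then collapses to an ordinary one, and with the canonical isotropic initialization $\mathbf{X}_0=\mathbf{I}$ (i.e. $\mathbf{U}(0)=\sqrt{\alpha}\,\mathbf{I}$) the trajectory has the closed form
\[
\mathbf{X}_\alpha(t) = \alpha\exp\!\Big(\mathbf{A}^*(\boldsymbol{\nu}_\alpha(t))\Big), \qquad \boldsymbol{\nu}_\alpha(t):=-\tfrac{4}{m}\int_0^t \mathbf{r}_\alpha(s)\,ds,
\]
so at convergence $\mathbf{X}_\alpha(\infty)=\alpha\exp(\mathbf{A}^*(\boldsymbol{\nu}_\alpha(\infty)))$ is a scaled matrix exponential of a vector in the measurement span.

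The final and most delicate step is the $\alpha\to0$ limit. Diagonalizing in $\mathbf{Q}$, the $j$-th eigenvalue obeys $\lambda_j^{(\alpha)}=\alpha\,e^{\mu_j}$, where $\mu_j$ is the $j$-th eigenvalue of $\mathbf{A}^*(\boldsymbol{\nu}_\alpha(\infty))$, giving the identity $\mu_j/\log(1/\alpha) = 1 + \log\lambda_j^{(\alpha)}/\log(1/\alpha)$. Passing to a subsequential limit $\boldsymbol{\nu}^*:=\lim \boldsymbol{\nu}_\alpha(\infty)/\log(1/\alpha)$, I would separate two regimes: on $\mathrm{range}(\widehat{\mathbf{X}})$ (where $\lambda_j^{(\alpha)}\to\hat\lambda_j>0$) the ratio tends to $1$, forcing the corresponding eigenvalue of $\mathbf{A}^*(\boldsymbol{\nu}^*)$ to equal $1$ and hence $(\mathbf{I}-\mathbf{A}^*(\boldsymbol{\nu}^*))\widehat{\mathbf{X}}=\mathbf{0}$; off that range ($\lambda_j^{(\alpha)}\to0$, so eventually $<1$) the ratio stays $<1$, yielding $\mathbf{A}^*(\boldsymbol{\nu}^*)\preceq\mathbf{I}$. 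Together with primal feasibility this verifies the KKT system.

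The hard part will be making this limiting argument rigorous rather than the algebraic KKT matching. In particular, I must justify that the normalized dual $\boldsymbol{\nu}_\alpha(\infty)/\log(1/\alpha)$—or at least the associated matrix $\mathbf{A}^*(\boldsymbol{\nu}^*)$ within the finite-dimensional span of $\{\mathbf{A}_i\}$—admits a convergent subsequence with a finite limit, since the off-support eigenvalues are only bounded above and could drift to $-\infty$ and must be controlled or capped. I must also justify the interchange of the $t\to\infty$ and $\alpha\to0$ limits and the finiteness of $\boldsymbol{\nu}_\alpha(\infty)$, which rely on the assumed convergence of gradient flow to a global minimizer together with the feasibility hypothesis $\mathbf{A}(\widehat{\mathbf{X}})=\mathbf{y}$. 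These compactness and limit-interchange technicalities are where the real work lies.
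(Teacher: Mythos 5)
Your proposal is correct and follows essentially the same route as the paper's proof: exploit commutativity of the symmetric $\mathbf{A}_i$ to collapse the gradient-flow trajectory into a closed-form matrix exponential $\mathbf{X}_\alpha(t)=\exp(\mathbf{A}^*(\mathbf{s}_t))\,\mathbf{X}_\alpha(0)\,\exp(\mathbf{A}^*(\mathbf{s}_t))$, then verify the KKT conditions (dual feasibility $\mathbf{A}^*(\boldsymbol{\nu})\preceq\mathbf{I}$ and complementary slackness) of the nuclear-norm program by an eigenvalue comparison normalized by $\log(1/\alpha)$, exactly as the paper does with $\beta=-\log\alpha$. The only differences are cosmetic (you specialize to $\mathbf{X}_0=\mathbf{I}$ and retain the $1/m$ scaling), and you explicitly flag the compactness and limit-interchange technicalities that the paper itself leaves implicit.
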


\begin{proof}

We begin by observing that each matrix $\mathbf{A}_i$ is real and symmetric and that all $\mathbf{A}_i$ commute, meaning $\mathbf{A}_i\mathbf{A}_j = \mathbf{A}_j\mathbf{A}_i$ for every $i,j$. Real symmetric matrices are orthogonally diagonalizable, and commuting diagonalizable operators admit a single orthonormal basis in which they are all diagonal. Concretely, there is one basis $\{\mathbf{v}_1,\dots,\mathbf{v}_n\}$ that simultaneously diagonalizes every $\mathbf{A}_i$, so any linear combination $\mathbf{A}^*(\mathbf{r})
\;=\;
\sum_{i} r_i\,\mathbf{A}_i$ is also diagonalizable in that basis.
Let $\mathbf{r}_{t}$ denote the loss residual at each step given as $\mathbf{r}_{t}= \mathbf{A}(\mathbf{X}_{t})- \mathbf{y}$,
then the Gradient flow iterates on $\mathbf{U}_{t}$  are given as:
\begin{equation}
\dot{\mathbf{U}}_{t} = - \mathbf{A}^*\big(\mathbf{A}(\mathbf{U}_{t}\mathbf{U}_{t}^{T}) - \mathbf{y}\big) \mathbf{U}_{t} = - \mathbf{A}^*(\mathbf{r}_{t}) \mathbf{U}_{t}
\end{equation}
This dynamics defines the behaviour of $\mathbf{X}_{t} =\mathbf{U}_{t}\mathbf{U}^{T}_{t} $ and using the chain rule, we get 
\begin{equation}
\dot{\mathbf{X}}_{t} = \dot{\mathbf{U}_{t}}\mathbf{U}^{T}_{t} + \mathbf{U}_{t} \dot{\mathbf{U}^{T}_{t}} = -\mathbf{A}^*(\mathbf{r}_t)\,\mathbf{X}_t - \mathbf{X}_t\,\mathbf{A}^*(\mathbf{r}_t).
\end{equation}
This equation has an explicit closed form solution given as:
\begin{equation}
\label{GF-traj}
\mathbf{X}_t = \exp(\mathbf{A}^{*}(\mathbf{s}_t)) \mathbf{X}_0 \exp(\mathbf{A}^{*}(\mathbf{s}_t)),
\end{equation}
where $\mathbf{s}_T = -\int_0^\top \mathbf{r}_t dt$. Assuming that the solution with a very small initialization $\alpha$ , i.e., $\hat{\mathbf{X}} = \lim_{\alpha \rightarrow 0} \mathbf{X}_{\alpha}(\infty)$ exists and $\mathbf{A}(\hat{\mathbf{X}})= \mathbf{y} $, when we converge to the zero global minima\footnote{The loss landscape in \eqref{BM-fit} is benign, i.e., it only consists of saddles and global minima and no local minima. So gradient flow is guaranteed to converge to a global minimum $\mathbf{A}(\hat{\mathbf{X}})= \mathbf{y}$.}, we want to show that $\hat{\mathbf{X}}$ is the solution to the problem $\min_{\mathbf{X} \succeq \mathbf{0}} \|\mathbf{X}\|_* \quad \text{subject to} \quad \mathbf{A}(\mathbf{X}) = \mathbf{y}.$
The KKT optimality conditions for the above optimization problem are: 
\begin{equation}
 \quad \mathbf{A}(\hat{\mathbf{X}}) = \mathbf{y}, \quad \hat{\mathbf{X}} \succeq \mathbf{0}, \quad \text{and}  \quad \mathbf{A}^{*} (\boldsymbol{\nu}) \preceq \mathbf{I}, \quad (\mathbf{I} - \mathbf{A}^{*} (\boldsymbol{\nu})) \hat{\mathbf{X}} = \mathbf{0} \quad \text{for some}\quad  \boldsymbol{\nu} \in \mathbb{R}^m.
\end{equation}
We already know the first condition holds as it is the global minimizer, the positive semidefiniteness condition ($\hat{\mathbf{X}} \succeq \mathbf{0}$) is ensured by the factorization $\mathbf{X} = \mathbf{U}\mathbf{U}^\top$. The remaining complementary slackness and dual feasibility conditions effectively require that $\hat{\mathbf{X}}$ be spanned by the top eigenvector(s) of $\mathbf{A}$. From the dual feasibility condition $\mathbf{A}^*(\boldsymbol{\nu}) \preceq \mathbf{I}$,
we know that all eigenvalues of $\mathbf{A}^*(\boldsymbol{\nu})$ are at most 1. The complementary slackness condition $(I - \mathbf{A}^{*}(\boldsymbol{\nu}))\,\mathbf{X} = \mathbf{0}$ then forces $\mathbf{X}$ to lie in the subspace where $\mathbf{A}^*(\nu)$ has eigenvalues 1 (i.e., its top eigenvalue/eigenvectors). Consequently, $\mathbf{X}$ is spanned by only those top eigenvectors of $\mathbf{A}^{*}$ 
where the eigenvalue is 1.
The gradient flow (GF) trajectory in equation \eqref{GF-traj}, for any non-zero $\mathbf{y}$ satisfies these KKT conditions as the initialization scale $\alpha \rightarrow 0$. As $\alpha \rightarrow 0$, the solution path of the trajectory $\mathbf{X}_{\alpha}(t) = \exp(\mathbf{A}^{*}(\mathbf{s}_t)) \mathbf{X}_{\alpha}(0) \exp(\mathbf{A}^{*}(\mathbf{s}_t))$ has the following properties. Let $\lambda_{k}(\mathbf{X}_{\alpha}(\infty))$ denote the $k^{th}$ eigenvalue of the limiting solution $\mathbf{X}_{\alpha}(\infty)$, then it can be shown that for all $k$ such that $\lambda_{k}(\mathbf{X}_{\alpha}(\infty)) >0 $, we have:
\begin{align}
\label{eig-infty}
\lambda_{k}\left(\A^*\!\left(\frac{\mathbf{s}_{\infty}(\beta)}{\beta}\right)\right)
- 1
- \frac{\ln\!\big(\lambda_{k}({\mathbf{X}}_{\alpha}(\infty))\big)}{2\,\beta}
\;\rightarrow\;0\:,
\end{align}
where $\beta = - \log(\alpha)$. \eqref{eig-infty} is obtained by comparing the $k^{th}$ eigenvalues of the trajectory $\mathbf{X}_{\alpha}(t)$. Defining $\mathbf{\gamma}(\beta)=\frac{\mathbf{s}_{\infty}(\beta)}{\beta}$, it can be concluded that for all $k$ if $\lambda_{k}(\hat{\mathbf{X}}_{\alpha}(\infty)) \neq 0$, then $\lim_{\beta \rightarrow \infty} \lambda_{k}(\A^*\!(\gamma(\beta))) =1$. Similarly for each $k$ such that $\lambda_{k}(\hat{\mathbf{X}}_{\alpha}(\infty))=0$, we obtain 
$\exp\big(\lambda_k\big(\mathbf{A}^*(\nu(\beta))\big) - 1\big)^{2\beta} \;\to\; 0\:,$
for large $\beta$, this implies $\lambda_k(\mathbf{A}^*(\nu(\beta)) <1$. Since, we denoted $\hat{\mathbf{X}} = \lim_{\alpha \rightarrow 0} \mathbf{X}_{\alpha}(\infty)$ as the limiting solution at very small initialization, we have $\lim_{\beta \rightarrow \infty} \A^*\!(\gamma(\beta)) \preceq \mathbf{I}$ and $\lim_{\beta \rightarrow \infty} \A^*\!(\gamma(\beta))\hat{\mathbf{X}}=\hat{\mathbf{X}} $. 
\end{proof}

This theorem formally establishes why gradient flow on the factorized representation $\mathbf{X} = \mathbf{U}\mathbf{U}^\top$ converges to the minimum nuclear norm solution. Essentially, the dynamics of gradient flow ensure that the optimization remains constrained to a low-rank subspace dictated by the top eigenvectors of $\mathbf{A}$. This means that this \textcolor{black}{over-parameterization} implicitly regularizes the solution by preferring low-rank structures.
This result offers a rigorous theoretical foundation for implicit regularization in deep image prior (DIP) models through \textcolor{black}{over-parameterized} factorization. \textcolor{black}{To make the presentation simple, measurement noise  is not assumed in Theorem~\ref{thm:commute} and the measurements are directly generated from the ground-truth as $\mathbf{y} = \mathbf{A}(\mathbf{X}^*)$. However, this can be extended to include measurement noise, as shown in Theorem 2.3 of \cite{ding2022validation}, where the authors demonstrate that low-rank solutions are recovered earlier, before overfitting occurs, in the presence of noise.}

The fact that gradient descent naturally avoids fitting measurement noise during the initial optimization iterations of DIP is intimately linked with an implicit nuclear norm minimization process.
Specifically, as the evolution of 
$\mathbf{X}_t = \mathbf{U}_t \mathbf{U}_t^\top$
keeps the optimization trajectory within a low-rank subspace, it biases the solution toward structured, low-complexity representations. Although in practice, neural networks have more complicated dynamics because they involve several non-linearities, the above theorem gives a simple example on how \textcolor{black}{over-parameterization} can implicitly bias the network \textcolor{black}{output trajectory to low-rank solutions}. However, $\mathbf{X}=\mathbf{U}\mathbf{U}^{T}$ usually assumes that the image $\mathbf{X}$ can be expressed as an output of symmetric encoder decoder type network. However, similar analysis with reparameterization $\mathbf{X}=\mathbf{U}\mathbf{V}^{T}$ may also yield a low-rank bias. 
\textcolor{black}{Although DIP initially shows a strong initial implicit bias toward low-rank solutions due to over-parameterization, prolonged training leads to the loss reaching zero, perfectly fitting the corrupted measurements. This has prompted extensive research into methods for avoiding overfitting, which we'll discuss further in the next section.}

\vspace{-0.2in}

\section{Recent Algorithms for Addressing the \textcolor{black}{Over-fitting} Issue}

This section introduces \textit{recent} DIP methods, categorized based on their approach to mitigating noise \textcolor{black}{over-fitting}. The first three subsections cover regularization techniques, early stopping strategies, and network re-parameterization methods, respectively. The final subsection discusses approaches that integrate DIP with pre-trained models. While numerous DIP variants exist, we focus on those that meet the following criteria: applicability to a range of tasks, competitive quantitative performance, and demonstrated robustness against noise \textcolor{black}{over-fitting}.

\vspace{-0.2in}

\subsection{Regularization-based Methods}


\subsubsection{Self-Guided DIP}
In Self-Guided DIP~\cite{liang2024analysis}, the authors proposed a denoising regularization term along with optimizing over the input and the parameters of the network. Specifically, the authors proposed
\begin{equation}\label{eqn: self-G DIP}
    \boldsymbol{\theta}',\mathbf{z}' = \arg \min_{\boldsymbol{\theta},\mathbf{z}} \|\mathbf{A}\mathbb{E}_{\boldsymbol{\eta}}[f_{\boldsymbol{\theta}}(\mathbf{z}+{\boldsymbol{\eta}})]-\mathbf{y}\|_2^2+ \lambda \|\mathbb{E}_{\boldsymbol{\eta}}[f_{\boldsymbol{\theta}}(\mathbf{z}+{\boldsymbol{\eta}})]-\mathbf{z}\|_2^2\:,
\end{equation}
where ${\boldsymbol{\eta}}$ is a random noise vector drawn from some distribution (either uniform or Gaussian). The final reconstruction is obtained as $\hat{\mathbf{x}} = \mathbb{E}_{\boldsymbol{\eta}}[f_{\boldsymbol{\theta}'}(\mathbf{z}'+{\boldsymbol{\eta}})]\:,$ where the expectation is replaced in implementations by an average of the network outputs over a fixed number of random input perturbations with noise.
The motivation of Self-Guided DIP is to remove the prior data dependence in Reference-Guided DIP \cite{zhao2020reference} (Ref-G DIP) where the authors have shown that using a prior image as input to the network (i.e., $\mathbf{z}$) improves performance. 
The regularization exploiting purely synthetic noise (and denoising) also plays a key role in performance. 
In addition to the regularization parameter, $\lambda$, the selection of ${\boldsymbol{\eta}}$ and the implementation of the expectation is also considered a hyperparameter in Self-Guided DIP.

Self-Guided DIP was evaluated on MRI reconstruction and inpainting tasks. Notably, for MRI, across various acceleration factors, modalities, and datasets, Self-Guided DIP not only outperformed Ref-G DIP (a method that requires a prior image) but also demonstrated that a DIP-based approach could surpass the well-trained supervised model such as MoDL~\cite{8434321}.

\subsubsection{Autoencoding Sequential DIP}
The authors in \cite{alkhouriNeuIPS24} introduced autoencoding Sequential DIP (aSeqDIP) which uses an autoencoding regularization term in addition to an input-adaptive objective function. Specifically, the updates in the aSeqDIP algorithm are 
\begin{equation}\label{eqn: aseqdip}
    \boldsymbol{\theta} \leftarrow \arg \min_{\boldsymbol{\theta}} \|\mathbf{A}f_{\boldsymbol{\theta}}(\mathbf{z}) - \mathbf{y}\|^2_2 + \lambda \|f_{\boldsymbol{\theta}}(\mathbf{z})-\mathbf{z}\|^2_2\:, \quad\quad \mathbf{z} \leftarrow f_{\boldsymbol{\theta}}(\mathbf{z})\:.
\end{equation}
The optimization in \eqref{eqn: aseqdip} is run for a few gradient steps and corresponds to the network parameters' update whereas the second part represents the network input update. These updates are run for the same number of optimization steps in Vanilla DIP. The hyperparameters in aSeqDIP are the regularization parameter and the number of updates in the second part of \eqref{eqn: aseqdip}. 

The motivation/intuition of aSeqDIP is the impact of the DIP network input on the performance. While the authors in \cite{tachella2021neural} have considered how a structured DIP network input can impact performance, the authors of aSeqDIP explored the employing a noisy version of the ground truth as the fixed input to the DIP objective in \eqref{eqn: standard DIP}. In particular, it was empirically shown that a closer similarity of DIP network input to the ground truth (i.e., less noise) corresponds to higher reconstruction quality. This led to the development of the input-adaptive algorithm that is based on the updates in \eqref{eqn: aseqdip}. 


Theoretically, the authors in aSeqDIP showed the impact of the DIP input through an NTK study using CNNs with a residual connection (Theorem A.1 in \cite{alkhouriNeuIPS24}). Empirically, aSeqDIP was evaluated on two medical image reconstruction tasks (MRI and sparse view CT) and three image restoration tasks (denoising, inpainting, and non-linear deblurring). Notable empirical results are: (\textit{i}) aSeqDIP was shown to have higher resilience to noise \textcolor{black}{over-fitting} when compared to other regularization-based DIP methods (see the results in Section~\ref{sec: exp DIP comparison}), and (\textit{ii}) quantitatively, aSeqDIP was shown to be either on-par with or outperform data-centric diffusion-based generative methods, such as Score-MRI \cite{chung2022score} and DPS \cite{chung2022diffusion}, that use models pre-trained with an extensive amount of data.   

\vspace{-0.2in}
\subsection{Early Stopping DIP}











The authors in \cite{wang2021early} proposed ES-DIP which estimates the optimization iteration for near-peak DIP PSNR performance by computing the running variance of intermediate reconstructions. The motivation of ES-DIP lies in the relation between the peak in the PSNR curve and the minimum of the moving variance curve observed in vanilla DIP. Then, based on this observation, the authors try to obtain the windowed moving variance (WMV). 

Specifically, let $W$ be the time window size and $P$ be the patience (duration or number of iterations) for which the variance does not change significantly. Then, the ES-DIP algorithm computes
\begin{equation}\label{eqn: ES-DIP}
    \textrm{VAR}_t \doteq \frac{1}{W} \sum^{W-1}_{w=0} \Big\| \mathbf{x}^{t+w} - \frac{1}{W} \sum^{W-1}_{i=0} \mathbf{x}^{t+i} \Big\|^2_2\:.
\end{equation}
If this value does not significantly change for $P$ iterations (the patience), the ES takes place. This means that $W$ and $P$ represent the main hyperparameters in ES-DIP.

Theoretically, using the DIP NTK approximation in \eqref{eqn: NTK first eqn} along with the window size $W$ shows how $\textrm{VAR}_t$ in \eqref{eqn: ES-DIP} depends on the singular values and left singular vectors of $\mathbf{J}$ (Theorem~2.1 in \cite{wang2021early}). The main insight of this theorem is that when the learning rate is sufficiently small, the WMV of $\mathbf{x}_t$ decreases monotonically. On this basis, the authors derived an upper bound of the WMV of $\mathbf{x}_t$ that depend on many parameters including the window size $W$, singular values of $\mathbf{J}$, and the measurements $\mathbf{y}$ (see Equation~(7) in the statement of Theorem~2.2 in \cite{wang2021early}). 

ES-DIP was evaluated for the tasks of denoising, super resolution, and MRI reconstruction. The authors have considered different noise types and levels. It was also extended to the blind setting by considering the blind image deblurring task. In addition to achieving good quantitative results, ES-DIP's run-time is also faster than many other DIP-based methods (\textcolor{black}{as will be demonstrated in Table~\ref{tab:run-time} of Section~\ref{sec: exp DIP comparison}}). The authors have shown that the proposed stopping criteria can be used to improve the performance of other network structures such as the network under-parameterized architecture in Deep Decoder \cite{heckel2019deep} (see the results of Fig. 10 in \cite{wang2021early}). 

\vspace{-0.2in}
\subsection{Network Re-parameterization Methods}
\subsubsection{Deep Decoder}

\textcolor{black}{Motivated by the need for image priors that avoid overfitting in inverse problems, Deep Decoder, proposed by \cite{heckel2019deep}, is an under-parameterized neural network that solely consists of the decoder portion of a U-Net.} Unlike traditional neural networks that rely on deep convolutional layers, the Deep Decoder avoids convolutional layers entirely and instead leverages upsampling operations, pixel-wise linear combinations, ReLU activations, and channel-wise normalization. This under-parameterization acts as an implicit regularizer, allowing the network to perform effectively in tasks such as image denoising without requiring explicit training. 

The Deep Decoder architecture follows a simple repetitive pattern of operations: a $1 \times 1$ convolution layer followed by an upsampling operation, a ReLU non-linearity, and a channel-wise normalization step. The standard Deep Decoder configuration uses six layers with a channel dimension of 128, resulting in approximately 100,224 parameters, significantly fewer than the number of pixels in an RGB image of size $512 \times 512$ (786,432 pixels). This under-parameterization serves as a natural regularizer, making the Deep Decoder robust to \textcolor{black}{over-fitting}. 

In terms of practical applications, Deep Decoder has been demonstrated to perform well in denoising and image reconstruction tasks.
Despite having fewer parameters than traditional models, it achieves denoising quality comparable to wavelet-based compression methods. Its effectiveness stems from the network's inherent structure, which biases it toward generating natural images even without explicit training.


\subsubsection{Optimal Eye Surgeon (OES)}

While Deep Decoder provides a strong baseline for under-parameterized networks, it is limited to a fixed decoder architecture. Optimal Eye Surgeon (OES) generalizes this concept to a wider class of neural architectures. Instead of strictly relying on a predefined structure, OES enables a principled pruning approach at initialization, effectively learning sparse sub-networks from over-parameterized models and then training these sub-networks to reconstruct images. In particular, OES first adaptively prunes the network at initialization so the mask it has learned is optimized to the underlying image or its measurements, then this subnetwork weights are updated to fit the measurement.

The working principle of OES is also based on under-parameterization just like the Deep Decoder that ensures the recovered image does not overfit to the target and corrupted measurements. OES generally consists of two stages: (\textit{i}) for a user specified sparsity level, the binary mask is learned using Gumbel Softmax re-parameterization which learns a Bernoulli distribution over the parameter space. These probabilities denote the importance of each parameter in generating the underlying measurement. (\textit{ii}) This learned mask is applied to the network to obtain a subnetwork which is trained to fit the image. 

Specifically, given a random initialization for parameter $\boldsymbol{\theta}_{in}$, OES finds a binary mask from the underlying measurements, i.e., $\mathbf{m}^*(\mathbf{y})$ with a given user-defined sparsity $\|\mathbf{m} \|_{0}<s$. Since solving a discrete optimization problem for neural networks is challenging, the authors \cite{ghosh2024optimal} propose a Bayesian relaxation (shown in Equation \eqref{eq:ber} below). This optimization problem is unconstrained, continuously differentiable and can be solved by iterative algorithms such as Gradient Descent after proper re-parameterization using the Gumbel Softmax Trick. Instead of learning the binary mask, the assumption is that the mask is sampled from a Bernoulli distribution $\mathbf{m} \sim Ber(\mathbf{p})$ and learn the probabilities of the mask $\mathbf{p}$ instead. The sparsity constraint is implemented through the KL divergence regularization which ensures that the learned sparsity level arises from $\mathbf{p}$ being close to a user defined $\mathbf{p}_{0} = \frac{s}{d}$, where $d$ is the parameter dimension. The probabilities corresponding to the parameters, i.e ., $\mathbf{p}$, are learned through the Gumbel Softmax trick. After learning $\mathbf{p}$, the weights are pruned based on the larger magnitudes of $\mathbf{p}$ to reach the desired sparsity level, given by the threshold function $C$.  Let $G(\boldsymbol{\theta}_{in} \textcolor{black}{\odot} \mathbf{m}, \mathbf{z})$ denote the image generator network initialized with random weights $\boldsymbol{\theta}_{in}$, and $\lambda$ denotes the regularization strength for the KL term. Then the mask learning optimization problem can be formulated as follows:
\begin{equation}
\label{eq:ber}
    \mathbf{m}^*(\mathbf{y}) = C(\mathbf{p}^*) \quad \text{s.t.} \quad 
    \mathbf{p}^* = \arg \min_{\mathbf{p}} \mathbb{E}_{\mathbf{m} \sim \text{Ber}(\mathbf{p})} 
    \left[ \| \mathbf{A}G(\boldsymbol{\theta}_{in} \odot \mathbf{m}, \mathbf{z}) - \mathbf{y} \|_2^2 \right] 
    + \lambda KL\left(\text{Ber}(\mathbf{p}) \| \text{Ber}(\mathbf{p}_0)\right).
\end{equation}
%

A key advantage of OES is its flexibility: it can be applied to any deep convolutional network architecture, making it broadly useful across different inverse problems. Experiments in \cite{ghosh2024optimal} demonstrate that OES-based subnetworks surpass other state-of-the-art pruning strategies such as the Lottery Ticket Hypothesis (which prunes based on the magnitude of the weights at convergence) in image denoising and recovery tasks. Another key advantage of OES is that a mask can be learned from one image as the target and then that masked subnetwork can be trained for denoising a different image. This is particularly useful when an image dataset includes images from diverse classes. 



\subsubsection{Double Over-parameterization (DOP)}

While Deep Decoder and OES use fewer parameters than the standard DIP neural network architecture, the authors in \cite{you2020robust} introduced DOP, a method that embraces over-parameterization for measurement noise modeling. This introduced over-parameterization imposes implicit regularization through the use of different learning rates for different components of the model. In particular, the authors introduced Hadamard-product-based over-parameterization \((\mathbf{g} \odot \mathbf{g} - \mathbf{h} \odot \mathbf{h})\), which, when optimized with small initialization and infinitesimal learning rate, effectively filters out the sparse noise in the measurements. The optimization problem in DOP is $\min_{\boldsymbol{\theta},\,\mathbf{g},\,\mathbf{h}} 
\left\| \mathcal{A}\,f_{\boldsymbol{\theta}}(\mathbf{z}) \;+\; 
(\mathbf{g} \odot \mathbf{g} \;-\; \mathbf{h} \odot \mathbf{h})
\;-\;
\mathbf{y} \right\|_{2}^{2}.$
As observed, variables \(\mathbf{g}\) and \(\mathbf{h}\) are introduced for modeling the noise in \(\mathbf{y}\). Theoretically, unlike previous approaches that require early stopping to prevent \textcolor{black}{over-fitting}, DOP's implicit bias ensures that no explicit stopping criterion is required. Along with the network's implicit bias towards natural images, the implicit bias of the Hadamard product captures the sparse noise.

A key insight into this implicit bias can be gained by considering a low-rank matrix factorization version of the problem, where the underlying variable is low-rank and is represented as \(\mathbf{U}\mathbf{U}^\top\) instead of \(f_{\boldsymbol{\theta}}(\mathbf{z})\). In this setting, the loss becomes $\min_{\mathbf{U},\,\mathbf{g},\,\mathbf{h}} \|\,
\mathcal{A}(\mathbf{U}\mathbf{U}^\top)
\;+\;
(\mathbf{g} \odot \mathbf{g} \;-\; \mathbf{h} \odot \mathbf{h})
\;-\;
\mathbf{y}
\|_{2}^{2}.$
With small initialization, gradient descent implicitly biases $f_{\boldsymbol{\theta}}(\mathbf{z})$ toward a low-nuclear-norm solution (i.e., enforcing low-rank) (refer to Theorem~4), while \(\mathbf{g} \odot \mathbf{g} - \mathbf{h} \odot \mathbf{h}\) remains sparse, mirroring an \(\ell_1\)-type penalty to capture the sparse noise. A central theoretical result in \cite{you2020robust} clarifies that this over-parameterized formulation, $\mathbf{X} \;=\; \mathbf{U} \mathbf{U}^\top, \mathbf{s}\;=\; \mathbf{g}\!\odot\!\mathbf{g} \;-\; \mathbf{h}\!\odot\!\mathbf{h},$
together with \emph{discrepant learning rates} for \(\{\mathbf{U}\}\) versus \(\{\mathbf{g}, \mathbf{h}\}\), yields a solution \((\mathbf{X}, \mathbf{s})\) that also solves the \emph{convex} program
\begin{equation}\label{eqn:implicit-bias}
\min_{\mathbf{X}\in\mathbb{R}^{n\times n},\, \mathbf{s}\in\mathbb{R}^m}
\;\;\| \mathbf{X}\|_* \;+\;\lambda\,\| \mathbf{s}\|_1
\quad
\text{subject to} \\
\quad
\mathcal{A}(\mathbf{X}) + \mathbf{s} = \mathbf{y},
\quad
\mathbf{X} \succeq \mathbf{0},
\end{equation}
where \(\lambda = 1/\alpha\). In other words, the ratio \(\alpha\) of the step sizes for \(\{\mathbf{g}, \mathbf{h}\}\) to that of \(\mathbf{U}\) acts as an implicit regularization parameter, balancing the nuclear norm of \(\mathbf{X}\) against the \(\ell_1\)-norm of \(\mathbf{s}\). By simply tuning \(\alpha\), one obtains the same trade-off that would otherwise require an explicit penalty \(\lambda\) in \(\|\mathbf{X}\|_* + \lambda \|\mathbf{s}\|_1\). Empirical results show that DOP provides superior performance compared to vanilla DIP model for image denoising, and it also outperforms traditional nuclear norm minimization techniques in low-rank matrix recovery. However, DOP’s increased parameterization can lead to higher computational cost relative to other methods.



\subsubsection{Deep Random Projector}

The work in \cite{Li_2023_CVPR} introduced the deep random projector (DRP), a method that combines three previously-explored approaches, and is proposed to mitigate the slowness issue (as we need a separate optimization for each measurement in DIP) in addition to the noise \textcolor{black}{over-fitting} problem. DRP proposes three modifications: (\textit{i}) optimizing over the input of the network and a subset of the network parameters, namely the batch normalization layers weights; (\textit{ii}) reducing the network depth (the number of layers); and (\textit{iii}) using the Total Variation (TV) prior regularization. In particular, the optimization takes place as $\min_{\mathbf{z}, \boldsymbol{\theta}_{\textrm{BN}}\subset \boldsymbol{\theta}} \| \mathbf{A}f'_{\boldsymbol{\theta}}(\mathbf{z}) - \mathbf{y}\|_2^2 + \lambda \rho_{\textrm{TV}}(f'_{\boldsymbol{\theta}}(\mathbf{z}))\:,$ 
where $\boldsymbol{\theta}_{\textrm{BN}}$ represents the affine parameters for batch normalization, and $f'$ represents the reduced depth network with the first layer as a batch normalization layer. In other words, $f'$ is a modified version of the standard network architecture which we defined earlier as $f$. Here, the second term is the total variation 
$\rho_\textrm{TV} = \lambda \sum^{n}_{i=1} | (\mathbf{D}_1f_{\boldsymbol{\theta}}(\mathbf{z}))_i| + | (\mathbf{D}_2f_{\boldsymbol{\theta}}(\mathbf{z}))_i|\:,$ 
where $\mathbf{D}_1$ and $\mathbf{D}_2$ are the finite difference operators for the first and second dimensions, respectively. The use of classical explicit TV regularization with DIP to mitigate noise \textcolor{black}{over-fitting} was explored in an earlier method, TV-DIP \cite{liu2019image}.


In addition to the regularization parameter, the number of reduced layers is also considered as a hyperparameter. DRP has been applied for tasks such as denoising, super-resolution, and inpainting. Empirically, DRP was shown to operate with the standard DIP network architecture as well as the deep decoder network architecture (described earlier in this subsection). Furthermore, DRP achieves notable speedups when compared to other methods.








\subsection{Combining DIP with Other Pre-trained Models}\label{sec: combining DIP with other models}


This subsection explores recent studies that integrate DIP with pre-trained models. Specifically, these works investigate whether DIP can enhance the performance of an existing pre-trained model for a given IIP or whether a hybrid approach can combine the strengths of both. To this end, we discuss four hybrid methods, including two from the final category in Fig.~\ref{fig: timeline} (DeepRED and uDiG-DIP).

First is DeepRED \cite{Mataev_2019_ICCV}, which proposed to combine DIP with a pre-trained denoiser. The authors use the concept of Regularization by Denoising (RED)~\cite{romano2017little}, which leverages existing pre-trained denoisers, as an explicit regularization prior to improve the performance of vanilla DIP in terms of noise \textcolor{black}{over-fitting} mitigation. This work was evaluated on three image restoration tasks (denoising, super resolution, and deblurring) and was shown to outperform the standalone conventional RED \cite{romano2017little}. 




More recently, the DIP framework was combined with diffusion models (DMs) \cite{chung2022diffusion}, as presented in deep diffusion image prior (DDIP) \cite{ddip}, the sequential Diffusion-Guided DIP (uDiG-DIP) \cite{LianguDigDIP}, and the Constrained Diffusion Deep Image Prior (CDDIP) \cite{goyes2024cddip}. In DDIP, the authors propose using the DIP framework to enhance the out-of-distribution adaptation of DM-based 3D reconstruction solvers in a meta-learning framework where fine-tuning the weights of the DM is needed. On the other hand, uDiG-DIP \cite{LianguDigDIP}, inspired by the impact of the DIP network input (similar to aSeqDIP \cite{alkhouriNeuIPS24}), uses the DM as a diffusion purifier where at each gradient update (i.e., the second update of \eqref{eqn: aseqdip}), the DM is used to refine the network input. uDiG-DIP was applied to MRI and sparse view CT and was shown to achieve high robustness to noise \textcolor{black}{over-fitting} in addition to outperforming DM-only methods such as \cite{chung2022score,chung2022improving}. CDDIP employs Tweedie's formula \cite{chung2022diffusion} to estimate an image via a pre-trained DM. At each sampling step, this estimate serves as the DIP network's input, used to enforce measurement consistency. Applied to seismic reconstruction, CDDIP was shown to outperform standalone DMs in terms of reconstruction quality and reduced sampling time steps.






%
\begin{figure}[t]
\centering
\includegraphics[width=1\linewidth]{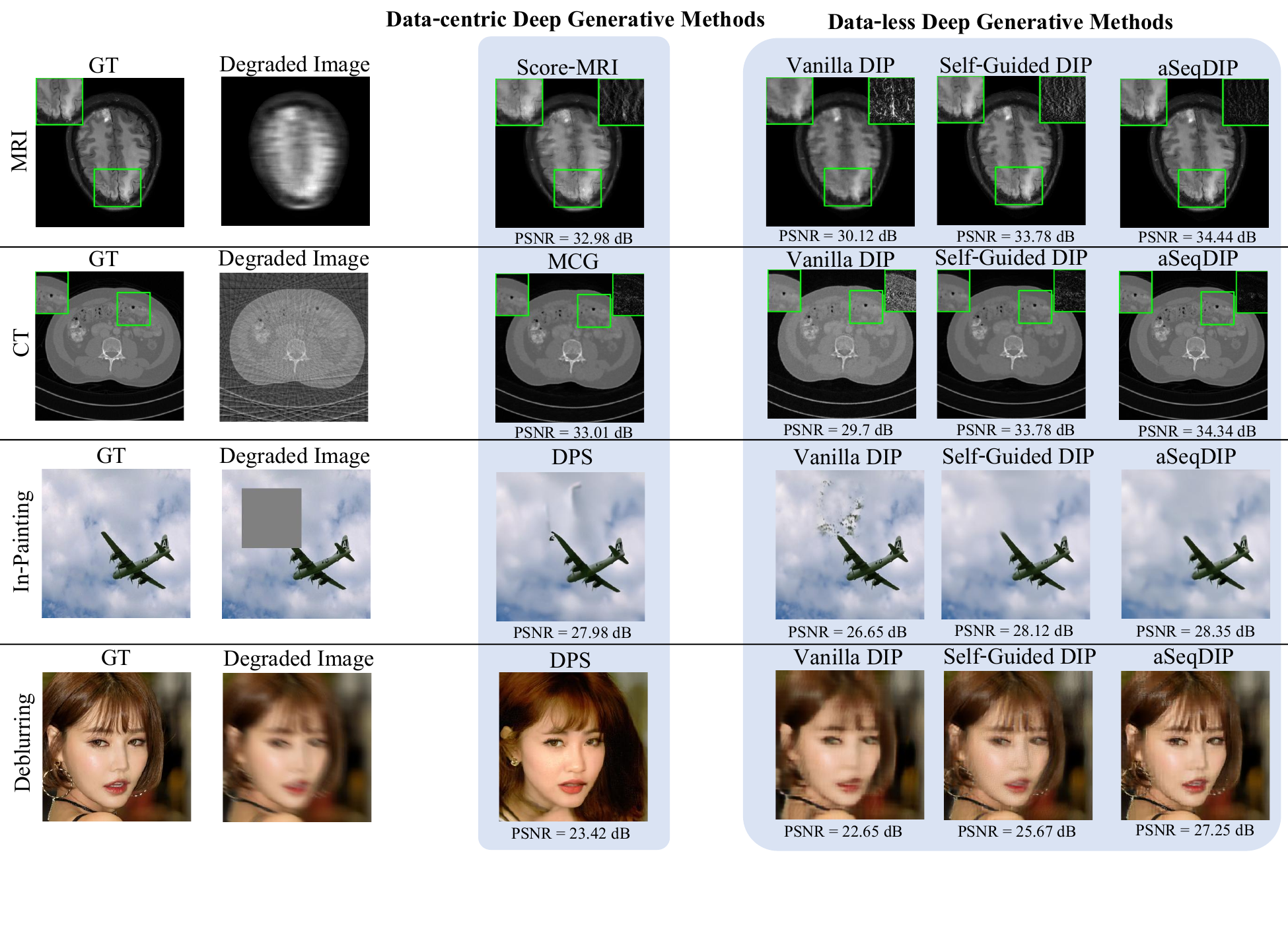}
\vspace{-2.122cm}
\caption{Reconstructed/recovered images using DM-based methods (3rd column) and data-less methods (columns 4 to 6). The ground truth (GT) and degraded images (under sampled measurements in MRI and CT, and corrupted images for natural image restoration) are shown in the first and second columns, respectively. PSNR results are given at the bottom of each reconstructed image. For MRI (4x undersampling) and CT (18 views), the top right box shows the absolute difference between the center region box of the reconstructed image and the same region in the GT image. For in-painting, we used hole to image ratio of $0.25$. For data-centric generative methods, we use Score-MRI \cite{chung2022score}, Manifold Constrained Gradient (MCG) \cite{chung2022improving}, and DPS \cite{chung2022diffusion}. \textcolor{black}{For Deblurring, aSeqDIP and self-guided DIP contain artifacts (e.g., the region near the ear) when compared to DPS (a data-centric method). However, DPS outputs a perceptually different image as compared to the GT. For MRI, CT, and box in-painting, aSeqDIP and self-guided DIP reconstructions contain sharper and clearer image features than other methods.} The images and settings of these experiments are sourced from Fig.~5 in \cite{alkhouriNeuIPS24}.}
\label{fig: visual}
\vspace{-0.4cm}
\end{figure}

\begin{figure}[t]
\centering
\includegraphics[width=0.8\linewidth]{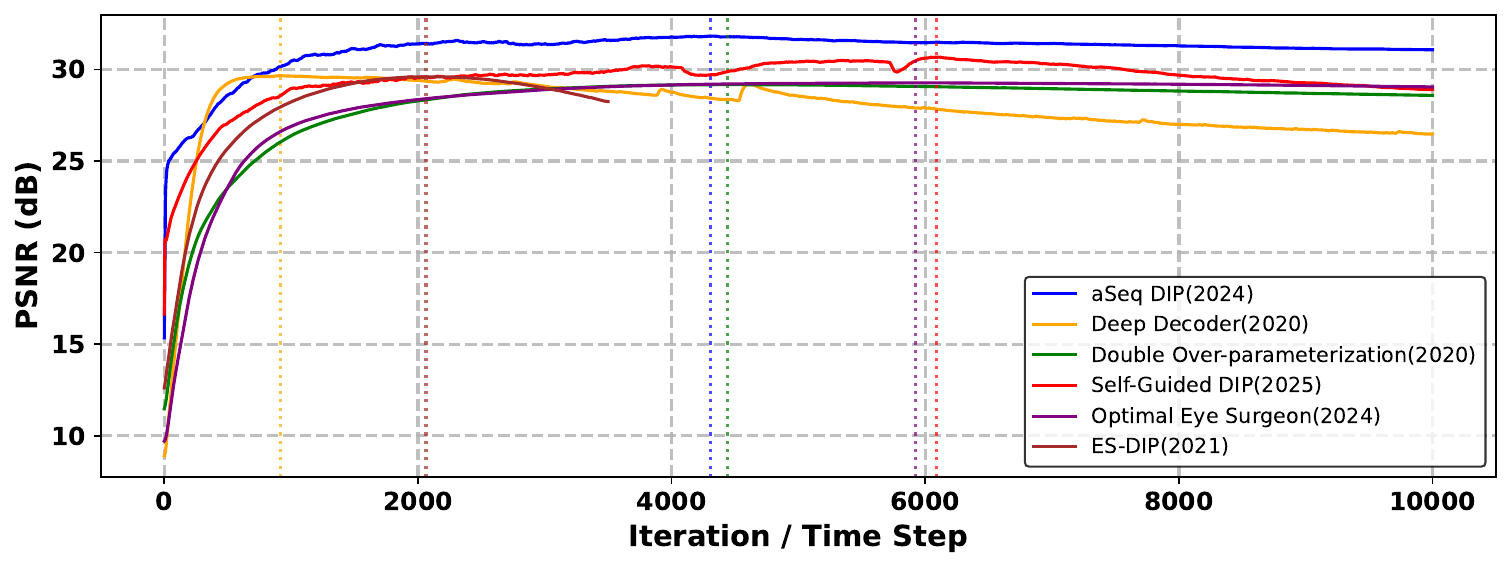}
\vspace{-0.6cm}
\caption{\textcolor{black}{Average PSNR curves (y-axis) for ten ImageNet test images using recent DIP-based methods for reducing noise over-fitting plotted over the optimization iteration/time-step (x-axis). Regularization (aSeqDIP and self-guided DIP), re-parameterization (deep decoder, double over-parameterization, and optimal eye surgeon), and early stopping (ES-DIP) methods are considered. The task is denoising with noise level of 0.01. ES-DIP curve stops near iteration 3500 as it is an early stopping method. As observed, regularization-based methods achieve the best results in terms of both robustness to noise over-fitting and reconstruction quality.}}
\label{fig: psnr_curve_compare}
\vspace{-0.4cm}
\end{figure}

\section{Empirical Results} \label{sec: exp DIP comparison}




\subsection{\textcolor{black}{Qualitative Comparison with Data-centric Methods}}



Here, we focus on the four imaging tasks: MRI from undersampled measurements, sparse view CT, in-painting, and non-linear deblurring. For MRI, we utilize the fastMRI dataset\footnote{\tiny{\url{https://github.com/facebookresearch/fastMRI}}}. The multi-coil data is acquired using 15 coils and is cropped to a resolution of 320×320 pixels. To simulate undersampling in the MRI k-space, cartesian masks with 4× acceleration are applied. Additionally, sensitivity maps for the coils are generated using the BART toolbox\footnote{\tiny{\url{https://mrirecon.github.io/bart/}}}. For sparse-view CT image reconstruction, we use the AAPM dataset\footnote{\tiny{\url{https://www.aapm.org/grandchallenge/lowdosect/}}}. The input image with $256 \times 256$ pixels is transformed into its sinogram representation using a Radon transform (the operator $\mathbf{A}$). The forward model assumes a monoenergetic source and no scatter/noise with $y_i = I_0 e^{-[\mathbf{A} \mathbf{x}^*]_i}$, with $I_0$ denoting the number of incident photons per ray (assumed to be $1$ for simplicity) and $i$ indexes the $i$th measurement or detector pixel. We use the post-log measurements for reconstruction, and the sparse-view angles are all equispaced or randomly selected from $180$ equispaced angles. For the tasks of in-painting and non-linear deblurring, we use the CBSD68 dataset\footnote{\tiny{\url{https://github.com/clausmichele/CBSD68-dataset}}}. For nonlinear deblurring, a neural network-approximated forward model is adopted as described in \cite{alkhouriNeuIPS24}. 

In Fig.~\ref{fig: visual}, we present the reconstructed images for the four tasks considered in this study. Each row corresponds to a different task. The first column shows the ground truth (GT) image, while the second column displays the degraded image. Columns 3 onward presents the reconstructed images produced by the data-dependent DM method, and the last three columns show the results obtained using the data-independent methods. 


\textcolor{black}{The results indicate that data-independent DIP methods, such as self-guided DIP and aSeqDIP, can outperform—or match—the performance of data-dependent DM-based methods. This trend is especially evident in MRI and CT reconstruction tasks, where the performance gap is more pronounced. In tasks like inpainting and deblurring, aSeqDIP also shows superior results compared to the DM-based baseline. We hypothesize that the competitiveness of DIP methods arises from the implicit bias of untrained CNNs, which—when overfitting to noise is properly controlled—can yield strong performance without requiring any training datasets. These findings highlight the potential of DIP approaches as viable alternatives to data-intensive methods.}

\subsection{\textcolor{black}{Robustness of DIP-based methods to noise over-fitting}}\label{sec: exp DIP comparison}


Here, we assess the robustness of various DIP methods to noise over-fitting for the denoising task. The average PSNR is computed over 10 images from the ImageNet dataset. All methods were optimized using Adam. aSeqDIP uses a \(10^{-4}\) learning rate and a \(\lambda = 1\) while Self-Guided DIP uses \(3 \times 10^{-4}\) and \(\lambda = 0.1\). ES-DIP used \(10^{-3}\), Deep Decoder method used 0.008, and DOP uses \(10^{-4}\). For OES, sparsity was $5\%$, with a mask training learning rate of \(10^{-2}\) and an image denoising learning rate of \(10^{-3}\). These hyperparameters follow the original papers of each method. As we can see from Fig.~\ref{fig: psnr_curve_compare}, aSeqDIP and Optimal Eye surgeon show the most competitive robustness against noise \textcolor{black}{over-fitting} while aSeqDIP and Self-guided DIP achieve the best PSNR.

\begin{table}[h]
    \centering
    \begin{tabular}{|l|l|l|l|}
        \hline
        \textbf{Method} & \textbf{Year} & \textbf{Category} &  \textbf{Average Run-time (seconds)}   \\
        \hline
        aSeqDIP \cite{alkhouriNeuIPS24} & 2024 & Regularization & 109 $\pm$ 24\\
        \hline
        Self-Guided DIP \cite{liang2024analysis} & 2025 & Regularization & 208 $\pm$ 43 \\
        \hline
        Deep Decoder \cite{heckel2019deep} & 2020 & Network re-parameterization & 144 $\pm$ 48 \\
        \hline
        Double over-parameterization \cite{you2020robust} & 2020 & Network re-parameterization & 156 $\pm$ 29 \\
        \hline
        Optimal Eye Surgeon \cite{ghosh2024optimal} & 2024 & Network re-parameterization & 107 $\pm$ 26 \\
        \hline
        ES-DIP \cite{wang2021early} & 2021 & Early stopping & 78 $\pm$ 45\\
        \hline
    \end{tabular}
    \vspace{0.25cm}
    \caption{\textcolor{black}{Comparison of different DIP methods in terms of average run-time (in seconds)}.}
    \vspace{-0.8cm}
    \label{tab:run-time}
\end{table}

\textcolor{black}{In Table~\ref{tab:run-time}, we present the wall clock run-time required for every method considered in the experiment of this section (using an RTX5000 GPU machine). As observed, ES-DIP reports the lowest run-time as the optimization stops early. These results compare the computational cost of various state-of-the-art methods.}

\section{Open Questions \& Further Directions}

In this section, we discuss open questions and future directions for DIP. Most DIP methods, if not all, utilize Convolution-based architectures. However, hybrid network structures that combine convolutional and attention layers may exhibit similar implicit biases as CNNs. 
Evaluating DIP and its data-less setting within such architectures presents a promising future direction.

Regarding \textcolor{black}{modalities} and tasks, DIP methods have been primarily applied to images and image inverse problems. Extending DIP to non-imaging inverse problems and other data modalities, such as graphs, is another potential avenue worth exploring. For example, in the case of graph data, would a convolution-based implicit prior be sufficient, or would graph neural networks be more suitable?

From a computational perspective, can DIP be made \textit{faster}? DIP is relatively slow compared to the inference time of certain data-centric methods (e.g., the supervised method in MoDL \cite{8434321}). DRP \cite{Li_2023_CVPR} have proposed some efficiency improvements. However, many opportunities for accelerating DIP remain. In particular, recent studies \cite{liang2024analysis,ghosh2024optimal} have explored transferring a pre-trained DIP network to a new image, potentially speeding up reconstruction. This raises a broader question: what does \textit{generalization} mean for DIP? Addressing these questions remains an open research challenge.

Existing theoretical analyses primarily focus on the implicit bias of gradient descent, whereas, in practice, preconditioned methods like ADAM are commonly used to ensure convergence. Moreover, implicit bias toward natural images emerges even with practical learning rates and initialization, rather than the small learning rates and carefully controlled initialization typically assumed in theoretical studies. A promising direction for future research is to extend beyond these simplified settings like NTK and investigate the impact of large step sizes and initialization strategies on the implicit bias of DIP. \textcolor{black}{There are also important questions about the extent to which existing theoretical analyses of DIP explain its strong performance in practice. For example, can the empirical NTK of randomly initialized deep networks like U-Nets act as a useful image filter? Constructing such a filter using real networks and images presents significant computational challenges. Additionally, current analyses are primarily only applicable to the original DIP formulation, which does not perform as well in practice as subsequent formulations. Extending existing theoretical approaches to explain the strong performance of these schemes is another significant research direction.}

An interesting direction is extending DIP to multiple measurements $\mathbf{y}_i$, $i\in \{1,\dots,N\}$ with $N>1$, transitioning from a data-less regime to self-supervised learning. \textcolor{black}{In particular, given $\mathbf{y}_i$, can we use DIP to learn a prior such that at testing time, we are able to improve the performance in terms of acceleration and reconstruction quality? Future works could explore
algorithms for addressing this question, and alongside explore the amount of training data needed and whether the degraded measurements at test-time need to be semantically related.}

\textcolor{black}{While integrating DIP with pre-trained DMs shows promise (e.g., \cite{LianguDigDIP,ddip}), open questions remain. Can DIP be \textit{efficiently} incorporated into accelerated DM samplers for measurement consistency? Addressing this could enable more robust, scalable integrations with generative frameworks.}

\section{Author Biographies and Contact Information}

\noindent \textbf{Ismail Alkhouri} (ismailal@umich.edu,alkhour3@msu.edu) is a Research Scientist at Systems Planning and Analysis (Alexandria, VA 22311), providing technical support to DARPA. He is a research scholar at the University of Michigan (UM) and Michigan State University (MSU). He earned his Ph.D. in Electrical and Computer Engineering from the University of Central Florida in May 2023 and was a postdoctoral researcher at MSU and UM from July 2023 to December 2024. He is a recipient of the 2025 CPAL Rising Stars Award. His research focuses on computational imaging with deep generative models and differentiable methods for combinatorial optimization.

\noindent \textbf{Avrajit Ghosh} (ghoshavr@berkeley.edu) is a Postdoctoral Fellow at the Simons Institute for the Theory of Computing, University of California, Berkeley. He received his Ph.D. degree in Computational Mathematics, Science and Engineering from Michigan State University in 2025. His research focuses on the theoretical foundations of deep learning and optimization.

\noindent \textbf{Evan Bell} (belleva1@msu.edu) is a Ph.D. student in Electrical and Computer Engineering at Johns Hopkins University (Baltimore, MD 21218). From May 2024 to August 2025, he was a post-baccalaureate researcher in the Theoretical Division at Los Alamos National Laboratory and the Department of Compuational Mathematics, Science and Engineering at Michigan State University. He received B.S. degrees in Mathematics and Physics from Michigan State University in 2024. His research focuses on solving inverse problems in computational imaging and physics using deep learning, particularly in limited data settings. 

\noindent  \textbf{Shijun Liang} (liangs16@msu.edu) received his B.S. degree in Biochemistry from the University of California, Davis, CA, USA, in 2017 as well as a Ph.D. degree in the Department of Biomedical Engineering at Michigan State University, East Lansing, MI, USA, in 2025. His research focuses on machine learning and optimization techniques for solving inverse problems in imaging. Specifically, he is interested in machine learning based image reconstruction and in enhancing the robustness of learning-based reconstruction algorithms.


\noindent  \textbf{Rongrong Wang} (wangron6@msu.edu) holds a B.S. in Mathematics and a B.A. in Economics from Peking University and a Ph.D. in Applied Mathematics from the University of Maryland, College Park. She is an Associate Professor at Michigan State University in Computational Mathematics and Mathematics. Previously, she was a postdoctoral researcher at the University of British Columbia. Her research focuses on modeling and optimization for data-driven computation, developing learning algorithms, optimization formulations, and scalable numerical methods with theoretical guarantees. Her work has applications in signal processing, machine learning, and inverse problems.

\noindent  \textbf{Saiprasad Ravishankar} (ravisha3@msu.edu) is an Associate Professor in Computational Mathematics, Science and Engineering, and Biomedical Engineering at Michigan State University. He received the B.Tech. in Electrical Engineering from IIT Madras, India, in 2008, and M.S. and Ph.D. in Electrical and Computer Engineering from UIUC, USA in 2010 and 2014. He was an Adjunct Lecturer and postdoc at UIUC and then postdoc at University of Michigan and Los Alamos National Laboratory (2015–2019). His interests include machine learning, imaging, signal processing, neuroscience, and optimization. He is a member of the IEEE Machine Learning for Signal Processing (MLSP) and Bio Imaging and Signal Processing (BISP) Technical Committees. He received the NSF CAREER Award in 2024.

\ifCLASSOPTIONcaptionsoff
  \newpage
\fi

{\footnotesize
\bibliographystyle{IEEEtranNoDash}
\bibliography{refs}
}

\end{document}